\documentclass[runningheads]{llncs}

\usepackage[T1]{fontenc}
\usepackage{graphicx}

\let\llncssubparagraph\subparagraph
\let\subparagraph\paragraph
\usepackage[compact]{titlesec}
\let\subparagraph\llncssubparagraph
\titlespacing\section{0pt}{12pt plus 4pt minus 2pt}{4pt plus 2pt minus 2pt}
\titlespacing\subsection{0pt}{12pt plus 4pt minus 2pt}{4pt plus 2pt minus 2pt}
\titlespacing\subsubsection{0pt}{12pt plus 4pt minus 2pt}{4pt plus 2pt minus 2pt}
\expandafter\def\expandafter\normalsize\expandafter{%
    \normalsize%
    \setlength\abovedisplayskip{1ex}%
    \setlength\belowdisplayskip{1ex}%
    \setlength\abovedisplayshortskip{0.5ex}%
    \setlength\belowdisplayshortskip{0.5ex}%
}
\usepackage[nolist]{acronym}
\usepackage{amsfonts, amsmath, amssymb}
\usepackage[
boxed,
shortend, 
linesnumbered,
vlined
]{algorithm2e}
\usepackage{anyfontsize}
\usepackage{bm}
\usepackage{booktabs}
\usepackage[format=plain]{caption}
\usepackage{colortbl}
\usepackage{csquotes}
\usepackage{diagbox}
\usepackage{etoolbox}
\usepackage{hyphenat} 
\usepackage[shortlabels]{enumitem}
\usepackage{float}
\usepackage{mathtools}
\usepackage[defaultlines=2, all]{nowidow}
\usepackage{sectsty}
\usepackage[nodisplayskipstretch]{setspace} 
\usepackage{scrhack}
\usepackage{subcaption}
\usepackage{tabularx}
\usepackage{tikz}
\usepackage{xcolor}
\usepackage{todonotes}
\setlength {\marginparwidth}{2.75cm}
\usepackage{pgfplots}
\usepackage[section]{placeins} 
\usepackage{xurl} 
\usetikzlibrary{arrows, shapes, calc, positioning}

\usepackage[hidelinks]{hyperref}
\usepackage[capitalize, noabbrev]{cleveref}

\hyphenation{Com-pa-ra-bi-lity tree-width}

\graphicspath{{graphics/}}

\newcommand{\tikzdots}{\hspace*{1.75pt}$\dots$}

\definecolor{GreyClass2}{RGB}{189,189,189}
\definecolor{graphcolorpeach}{RGB}{254,232,200}
\definecolor{graphcolorpeach-orange}{RGB}{253,187,132}
\definecolor{graphcolordark-orange}{RGB}{227,74,51}
\definecolor{graphcolorred}{RGB}{215,25,28}
\definecolor{graphcoloralternative-peach-orange}{RGB}{253,174,97}
\definecolor{graphcoloralternative-peach-orange2}{RGB}{244,165,130}
\definecolor{battleshipgrey}{rgb}{0.52, 0.52, 0.51}
\definecolor{HKS44}{RGB}{0,84,159}
\definecolor{HKS44_75prcnt}{RGB}{64,127,183}
\definecolor{RWTH_Lightblue}{RGB}{142,189,229}
\definecolor{HKS44_25prcnt}{RGB}{199,221,242}
\definecolor{RWTH_Gold}{HTML}{F9A02C}
\definecolor{RWTH_Silver}{HTML}{797A7C}
\definecolor{RWTH_Magenta}{RGB}{227,0,102}
\definecolor{RWTH_Yellow}{RGB}{255,237,0}
\definecolor{RWTH_Petrol}{RGB}{0,97,101}
\definecolor{RWTH_Cyan}{RGB}{0,152,161}
\definecolor{RWTH_Green}{RGB}{87,171,39}
\colorlet{RWTH_Green_50prcnt}{RWTH_Green!50}
\definecolor{RWTH_Maygreen}{RGB}{189,205,0}
\colorlet{RWTH_Maygreen_50prcnt}{RWTH_Maygreen!50}
\definecolor{RWTH_Orange}{RGB}{246,168,0}
\definecolor{RWTH_Red}{RGB}{204,7,30}
\colorlet{RWTH_Red_50prcnt}{RWTH_Red!50}
\definecolor{RWTH_Bordeaux}{RGB}{161,16,53}
\definecolor{RWTH_Violett}{RGB}{97,33,88}
\definecolor{RWTH_Purple}{RGB}{122,111,172}
\definecolor{cutcolor1}{RGB}{240,249,232}
\definecolor{cutcolor2}{RGB}{186,228,188}
\definecolor{cutcolor3}{RGB}{123,204,196}
\definecolor{cutcolor4}{RGB}{67,162,202}
\definecolor{cutcolor5}{RGB}{8,104,172}
\definecolor{intervalcolor1} {RGB} {215,5,28}
\definecolor{intervalcolor2} {RGB} {253,174,97}
\definecolor{intervalcolor3} {RGB} {142,189,229} 
\definecolor{intervalcolor4} {RGB} {0,84,159} 
\definecolor{intervalcolor5} {RGB} {171,221,164}
\definecolor{arccolor1} {RGB} {142,189,229} 
\definecolor{arccolor2} {RGB} {166,86,40} 
\definecolor{arccolor3} {RGB} {247,129,191} 
\definecolor{arccolor4} {RGB} {228,26,28} 
\definecolor{arccolor5} {RGB} {0,84,159} 
\definecolor{arccolor6} {RGB} {152,78,163} 
\definecolor{arccolor7} {RGB} {0,66,37} 
\definecolor{arccolor8} {RGB} {255,127,0} 
\definecolor{classcolor1} {RGB} {44,127,184}
\definecolor{classcolor2} {RGB} {127,205,187}
\definecolor{classcolor3} {RGB} {237,248,177}
\definecolor{t1}{RGB}{166,206,227}
\definecolor{t2}{RGB}{178,223,138}
\definecolor{t3}{RGB}{31,120,180}
\definecolor{t4}{RGB}{51,160,44}
\definecolor{t5}{RGB}{253,191,111}
\definecolor{t6}{RGB}{255,127,0}
\definecolor{t7}{RGB}{202,178,214}
\definecolor{t8}{RGB}{177,89,40}
\definecolor{t9}{RGB}{227,26,28}
\definecolor{t10}{RGB}{255,237,111}
\definecolor{t11}{RGB}{251,154,153}
\definecolor{t12}{RGB}{106,61,154}
\newcommand{\eg}{e.\,g.,\ } 
\newcommand{\ie}{i.\,e.,\ } 
\newcommand{\wrt}{w.\,r.\,t.\ }
\newcommand{\Wlog}{W.\,l.\,o.\,g.,\ }
\newcommand{\WLOG}{w.\,l.\,o.\,g.\ }

\newcommand{\pushright}[1]{\ifmeasuring@#1\else\omit\hfill$\displaystyle#1$\fi\ignorespaces}
\newcommand{\pushleft}[1]{\ifmeasuring@#1\else\omit$\displaystyle#1$\hfill\fi\ignorespaces}

\def\hmath$#1${\texorpdfstring{{\rmfamily\textit{#1}}}{#1}}

\newcommand{\rom}[1]{\mathrm{\uppercase\expandafter{\romannumeral #1\relax}}}

\let\epsilon\varepsilon
\let\rho\varrho
\let\phi\varphi
\let\theta\vartheta


\renewcommand{\tilde}{\widetilde}

\DeclareMathOperator{\disjcup}{\mathbin{\dot{\cup}}} 
\newcommand{\generalGraph}{\mathcal{G}}

\newcommand{\pathGraph}{\mathcal{P}}


\newcommand{\AGR}[2]{
	\mathrm{\texttt{\upshape AGR#1[}}%
	#2%
	\mathrm{\texttt{\upshape]}}
}

\DeclareMathOperator{\pred}{pred}
\DeclareMathOperator{\psucc}{succ}

\DeclareMathOperator{\gap}{gap}

\DeclareMathOperator{\Option}{O}
\DeclareMathOperator{\Shortminus}{\! - \!}
\DeclareMathOperator{\Shortplus}{\! + \!}

\newcommand\mydots{\makebox[1em][c]{.\hfil.\hfil.\,}}

\newcommand{\Partition}[1]{
	\mathrm{Part(}#1\mathrm{)}
}

\newcommand{\revOpt}[2]{
	\mathcal{R}_{#1}(\Option_{#2})
}
\newcommand{\olb}{\alpha^*}
\newcommand{\oub}{\beta^*}
\newcommand{\price}{\pi}
\newcommand{\p}{p}

\AtEndEnvironment{proof}{\phantom{}\qed} 

\newcolumntype{Y}{>{\centering\arraybackslash}X}
\newcolumntype{L}[1]{>{\raggedright\arraybackslash}p{#1}} 
\newcolumntype{C}[1]{>{\centering\arraybackslash}p{#1}} 
\newcolumntype{R}[1]{>{\raggedleft\arraybackslash}p{#1}} 

\makeatletter
\patchcmd{\@algocf@start}
{-1.5em}
{-.75em}
{}{}
\makeatother

\newcommand{\algorithmFont}[1]{\small\ttfamily{#1}}

\SetAlCapSkip{1ex}
\SetAlFnt{\algorithmFont}
\SetNlSkip{0.5em}
\addtolength{\algomargin}{-0.3em} 
\newcommand{\algorithmInit}{
	\setstretch{1.05}
	\DontPrintSemicolon%
	\SetNlSty{lineNumberFont}{}{}
	\SetKwProg{Fn}{function}{}{end}%
	\SetArgSty{argFont} 
	\SetKwFunction{KwFn}{function} 
	\SetKwSty{keywordFont}%
	\SetCommentSty{commentFont}%
	\SetKw{break}{break}%
	\SetInd{.4em}{.8em}
}
\SetKw{To}{to}
\SetKw{With}{with}
\SetKw{And}{and}
\SetKw{Or}{or}
\SetKw{Not}{not}
\SetKw{Downto}{downto}
\SetKw{DownTo}{downto}
\SetKw{Step}{step}
\SetKwRepeat{Do}{do}{while}

\let\oldnl\nl
\newcommand{\nonl}{\renewcommand{\nl}{\let\nl\oldnl}}



\spnewtheorem {observation}[lemma]{Observation}{\bfseries}{\normalfont}

\usepackage{thm-restate}


\setlist[description]{
	align = left,
	font={\bfseries\rmfamily},
	leftmargin=0pt,
	labelindent=0pt,
	listparindent=\parindent,
	labelwidth=0pt,
	itemindent=!,
	labelsep = 6pt,
	itemsep=2pt,
	topsep=6pt,
	parsep=2pt,
}

\makeatletter
\def\namedlabel#1#2{\begingroup
	#2%
	\def\@currentlabel{#2}%
	\phantomsection\label{#1}\endgroup
}
\makeatother

\input{TikzSet.tex}

\makeatletter
\patchcmd\@acf{\AC@acl}{\AC@foo}{}{}
\patchcmd\@acf{\AC@acl}{\AC@foo}{}{}
\patchcmd\@acf{\AC@foo}{\hskip\z@\AC@acl}{}{}
\patchcmd\@acf{\AC@foo}{\hskip\z@\AC@acl}{}{}
\makeatother

\begin{document}
	\SetKwFunction{StackVCP}{StackVCP}%
	\SetKwFunction{ComputeBounds}{ComputeBounds}%
	\SetKwFunction{CompareOptions}{CompareOptions}%
	\SetKwFunction{ComputeOptionsEOP}{ComputeOptionsEOP}%
	\SetKwFunction{ResolvePrices}{ResolvePrices}%
\begin{acronym}
	\acro{VC}{Vertex Cover}
	\acro{VC-Problem}[\textsc{VC}]{\textsc{Vertex Cover}}
	\acro{WVC}{Weighted Vertex Cover}
	\acro{MWVC}{minimum weight vertex cover}
	\acro{MST}[MST]{Minimum Spanning Tree}
	\acro{WVC-Problem}[\textsc{MWVC}]{\textsc{Minimum Weight Vertex Cover}}
	\acro{StackIS}[\textsc{StackIS}]{\textsc{Stackelberg Independent Set}}
	\acro{StackIntS}[\textsc{StackIvSch}]{\textsc{Stackelberg Interval Scheduling}}
	\acro{KP}[\textsc{KP}]{\textsc{Knapsack}}
	\acro{StackKP}[\textsc{StackKP}]{\textsc{Stackelberg Knapsack}}
	\acro{StackKP-W}[\textsc{StackKP-W}]{\textsc{Stackelberg Knapsack} (weight-controlled)}
	\acro{StackKP-P}[\textsc{StackKP-P}]{\textsc{Stackelberg Knapsack} (price-controlled)}
	\acro{MC}[\textsc{MC}]{\textsc{Max Closure}}
	\acro{StackMC}[\textsc{StackMC}]{\textsc{Stackelberg Max Closure}}
	\acro{StackMC-k}[$\textsc{StackMC}_k$]{\textsc{Stackelberg Max Closure}}
	\acro{MM}[\textsc{MaxM}]{\textsc{Maximum Matching}}
	\acro{StackMM}[\textsc{StackMaxM}]{\textsc{Stackelberg Maximum Matching}}
	\acro{MST-Problem}[\textsc{MST}]{\textsc{Minimum Spanning Tree}}
	\acro{StackMST}[\textsc{StackMST}]{\textsc{Stackelberg Minimum Spanning Tree}}
	\acro{SC}[\textsc{SC}]{\textsc{Set Cover}}
	\acro{StackSC}[\textsc{StackSC}]{\textsc{Stackelberg Set Cover}}
	\acro{StackSetPack}[\textsc{StackSetPack}]{\textsc{Stackelberg Set Packing}}
	\acro{SP}[\textsc{SP}]{\textsc{Shortest-Path}}
	\acro{StackSP}[\textsc{StackSP}]{\textsc{Stackelberg Shortest-Path}}
	\acro{StackSP-k}[$\textsc{StackSP}_k$]{\textsc{Stackelberg Shortest-Path}}
	\acro{SPT}[\textsc{SP}]{\textsc{Shortest-Path-Tree}}
	\acro{StackSPT}[\textsc{StackSPT}]{\textsc{Stackelberg Shortest\hyp{}Path\hyp{}Tree}}
	\acro{StackSPT-k}[$\textsc{StackSPT}_k$]{$\textsc{Stackelberg Shortest-Path-Tree}_k$}
	\acro{Stack}[\textsc{Stack}]{\textsc{Stackelberg Vertex Pricing Game}}
	\acro{Stack-k}[$\textsc{Stack}_k$]{\textsc{Stackelberg Vertex Pricing Game}}
	\acro{VC-Problem}[\textsc{VC}]{\textsc{Vertex Cover}}
	\acro{StackVC}[\textsc{StackVC}]{\textsc{Stackelberg Vertex Cover}}
	\acro{StackVCP}[\textsc{StackVCP}]{\textsc{Stackelberg Vertex Cover} on a path}
	\acro{StackVC-k}[$\textsc{StackVC}_k$]{\textsc{Stackelberg Vertex Cover}}
	\acro{3-SAT}{\textsc{3-Satisfiability}}
\end{acronym}
\acused{StackVC-k}
\acused{StackSP-k}
\acused{StackKP-W}
\acused{StackKP-P}


	\title{Stackelberg Vertex Cover on a Path\thanks{A shortened version of this paper appeared in the proceedings of the 16$^{\text{th}}$ Symposium on Algorithmic Game Theory (SAGT) \cite{StackVCP_SAGT23}.}}
	\author{Katharina Eickhoff\thanks{This work is funded by the Deutsche Forschungsgemeinschaft (DFG, German Research Foundation) – 2236/2} \and
		Lennart Kauther \and
		Britta Peis}
	\authorrunning{Eickhoff, Kauther, Peis}
	\institute{Chair of Management Science, RWTH Aachen University, Aachen, Germany\\
		\email{\{eickhoff,kauther,peis\}@oms.rwth-aachen.de}\\
		\url{https://www.oms.rwth-aachen.de/}
	}
	\maketitle
	\begin{abstract}
		A Stackelberg Vertex Cover game is played on an undirected graph $\generalGraph$ where some of the vertices are under the control of a \textit{leader}. The remaining vertices are assigned a fixed weight.
The game is played in two stages. First, the leader chooses prices for the vertices under her control. Afterward, the second player, called \textit{follower}, selects a min weight vertex cover in the resulting weighted graph. That is, the follower selects a subset of vertices $C^*$ such that every edge has at least one endpoint in $C^*$ of minimum weight w.r.t.\ to the fixed weights and the prices set by the leader. \acf{StackVC} describes the leader's optimization problem to select prices in the first stage of the game so as to maximize her revenue, which is the cumulative price of all her (priceable) vertices that are contained in the follower's solution. 
Previous research showed that \acs{StackVC} is \textsf{NP}-hard on bipartite graphs, but solvable in polynomial time in the special case of bipartite graphs, where all priceable vertices belong to the same side of the bipartition. In this paper, we investigate \ac{StackVC} on paths and present a dynamic program with linear time and space complexity.
		
		\keywords{Stackelberg Network Pricing \and Vertex Cover \and Dynamic Programming \and Algorithmic Game Theory \and Bilevel Optimization.}
	\end{abstract}

\section{Introduction}
	\label{sec:Introduction}
	Stackelberg games -- sometimes also called leader-follower games -- are a tremendously powerful framework to capture situations in which a dominant party called the \textit{leader} first makes a decision and afterward, the other players called \textit{followers} react to the leader's move. An important detail thereby is that the leader has  information about the followers' preferences and potential competitors. She thus tries to leverage this informational and temporal advantage to maximize her revenue. A typical example of such a scenario is a mono- or oligopolist dictating prices, and the consumers adapting their behavior accordingly. 
	The concept of Stackelberg games traces back to the German economist Heinrich Freiherr von Stackelberg \cite{Stackelberg1934} who also coined the term. 
	
	In many applications, the followers' preferences have a combinatorial structure and can be represented by a graph. 
	For instance, the edges of the graph may represent all possible means of transport in a city, some of which, \eg certain turnpikes, are under the control of the leader, and one follower wants to find a shortest or cheapest path between a designated source and destination. This would be an example of \ac{StackSP} with a single follower. The leader then obtains revenue corresponding to the price of every edge under her control that is included in the shortest path selected by the follower. Labbè et al. \cite{doi:10.1287/mnsc.44.12.1608} first introduced \ac{StackSP} to optimize turnpike returns. By doing so, they kicked off an entirely new branch of research on so-called \emph{Stackelberg network pricing games}. In these, the leader controls a subset of the vertices or edges on which she can set prices in the first stage of the game. The other resources are assigned fixed prices. 
	In the second stage of the game, one or several followers each select an optimal solution to the problem under consideration (e.g. \acl{SP}, \acl{MST-Problem}, \acl{WVC-Problem}, \ldots) based on the fixed prices, and the prices set by the leader. The goal of the leader is to select prices in the first stage of the game so as to maximize the resulting revenue, which is obtained from the prices of all resources she controls that are selected into the optimal solution(s) of the follower(s). As usual in the Stackelberg literature, we assume that the followers break ties in favor of the leader. Furthermore, we restrict our analysis to Stackelberg games in the single-follower setting.
	
	Other common combinatorial graph structures that have been investigated under the Stackelberg framework include \ac{MST-Problem} \cite{BILO2015643,10.1007/978-3-540-73951-7_7,Cardinal_StackelbergMST_PlanarTW}, \acl{MC} aka. project-selection \cite{StackMaxClosureOMS}, and \ac{WVC-Problem} \cite{BriestEtAl_NetworkPricing,StackMaxClosureOMS}.
	Note that in the latter two problems, the leader controls part of the vertices, while in \acs{StackSP} and \acs{StackMST} the leader controls part of the edges. We provide further details on related work below. Stackelberg network pricing games are notoriously hard, and so a striking commonality of prior research on Stackelberg network pricing games is that they mostly obtained negative complexity results. For example, \acs{StackSP} and  \textsc{StackMST} are known to be APX-hard, even in the single-follower setting \cite{10.1007/978-3-540-73951-7_7,Joret_2010}. The few positive complexity results are often either \textsf{FPT}-algorithms where the number of priceable entities is parameterized, \eg \cite{BILO2015643,10.5555/1071747.1071751}, or obtained under severe restrictions of the input.
	For instance, Briest et\ al.~\cite{BriestEtAl_NetworkPricing} showed that \ac{StackVC} is solvable on bipartite graphs if all priceable vertices lie on the same side of the bipartition. Their result was later improved by Baïou and Barahona~\cite{BipartiteVCpreflow}. In contrast, Jungnitsch et al.~\cite{StackMaxClosureOMS} showed that \ac{StackVC} -- the Stackelberg version of \ac{WVC-Problem} -- is \textsf{NP}-hard on bipartite graphs when the priceable vertices are allowed to lie on both sides of the bipartition. This left the question whether \ac{StackVC} is solvable in polynomial time on special classes of bipartite graphs, like paths and trees.
	
	\vfill
	\newpage
	
	In this paper, we show that \ac{StackVC} is solvable in linear time and storage on paths. It turns out that the algorithm and its analysis are quite involved; surprisingly so, given the extremely restrictive structure of a path. 
	Our main contribution is the following theorem:
	\begin{theorem}
		\label{thm:StackVCPlinTime}
		\acl{StackVC} can be solved on a path in strongly linear time and storage. 
	\end{theorem}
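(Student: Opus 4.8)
The plan is to build a left-to-right dynamic program over the vertices $v_1,\dots,v_n$ of the path, preceded by a reformulation that removes the prices from the objective. Since the leader may put price $+\infty$ on any priceable vertex she does not want bought, an optimal leader strategy is fully described by a pair $(C,p)$, where $C$ is the vertex cover the follower is steered toward (its complement must be an independent set -- the only combinatorial constraint a path imposes) and $p$ is a nonnegative price vector supported on the priceable vertices of $C$, such that $C$ is a minimum-weight vertex cover with respect to the fixed weights together with $p$ (and $+\infty$ on the remaining priceable vertices). Because the follower breaks ties in the leader's favor, and because the $+\infty$-padding forces every minimum-weight cover to use only priceable vertices already in $C$, the leader's realized revenue for such a pair is exactly $\sum_{v\in C,\ v\text{ priceable}} p_v$; so the task becomes to maximize this sum over all feasible $(C,p)$. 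The first milestone is a recursive description of $\mathrm{rev}^{\star}(C)$, the largest revenue compatible with a fixed target cover $C$, obtained by pushing each active price up to the point where the follower becomes indifferent.

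I would then design the DP so that its state at position $i$ consists of (i) the combinatorial bit ``$v_i\in C$?'' (needed because $v_i\notin C$ forces $v_{i-1}\in C$ and constrains $v_{i+1}$), and (ii) enough information to certify that no cheaper cover exists. The certificate is the awkward part, since minimality is \emph{not} local: a cheap cover far to the right can retroactively invalidate a left-hand choice. To handle this I would carry a real ``slack'' parameter $g\ge 0$ -- how much the cheapest completion of the chosen prefix may still exceed the globally cheapest completion consistent with the boundary -- and make the DP value a function $\mathrm{rev}_i(\,\cdot\,)$ giving the best prefix revenue as a function of the admissible slack; the mirror quantity (by how much the current branch undercuts every alternative, which bounds how much more price a future vertex can absorb) is tracked symmetrically. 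The transition from $v_{i-1}$ to $v_i$ then has only constantly many cases -- $v_i$ priceable or fixed; the four in/out combinations of $(v_{i-1},v_i)$; the two path ends -- each of which enforces that $e_{i-1}$ is covered, updates the slack by $\pm w_{v_i}$ or by an optimally chosen price, and (by complementary slackness) forbids ``double coverage'' from creating spurious slack. The claims to establish are that each $\mathrm{rev}_i$ is piecewise linear, monotone and concave with $O(1)$ breakpoints added per step, that every transition runs in $O(1)$ unit-cost arithmetic operations on these succinct representations, and that the recursion computes exactly $\mathrm{rev}^{\star}$ (the tie-break convention is free: it suffices that $C$ be weakly optimal, and the revenue is read off the constructed prices).

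The main obstacle I anticipate is keeping the value functions succinct across \emph{long runs of consecutive priceable vertices}. Each such vertex contributes a free price variable capped by the currently available slack, so naively the number of linear pieces could grow with the length of the run; the crux of the analysis is to show that the pieces collapse -- essentially that within a priceable block only the aggregate slack at the block's fixed-weight endpoints matters, so the concave envelope retains only $O(1)$ pieces -- thereby preserving linear total size and time. A related subtlety is that the binding ``alternative cover'' may branch off either before or after the current edge, so the slack is really governed by two competing quantities and the invariant must retain whichever will eventually bind; pinning down this invariant and checking that it survives each of the finitely many transition cases (and the endpoint cases) is where most of the genuinely involved work sits. Granting all this, strong linearity is immediate: $n$ steps, $O(1)$ state and $O(1)$ arithmetic per step, with backtracking through the table recovering an optimal price vector.
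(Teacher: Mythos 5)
Your opening reformulation (optimize over pairs $(C,p)$ with $C$ a target cover and excluded priceable vertices padded to $+\infty$) is sound and matches the paper's view that the leader must first choose \emph{which} priceable vertices to sell, but the proof as proposed has a genuine gap exactly where you locate ``the genuinely involved work.'' The central claims of your DP --- that a single real slack parameter (plus the in/out bit) is a sufficient statistic for the future, and that $\mathrm{rev}_i(\cdot)$ is concave with $O(1)$ pieces --- are asserted, not established, and the concavity claim is in fact doubtful: $\mathrm{rev}_i$ is a maximum over combinatorial regimes (which subset of the remaining priceable vertices to sacrifice) of functions that are individually linear in the slack, and a maximum of such functions is in general convex rather than concave, so taking a ``concave envelope'' would over-report achievable revenue. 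The paper's substitute for this missing step is a chain of explicit structural lemmas: prices outside the interval $[\olb[i],\oub[i]\,]$ are never useful; a change of $\price_i$ by $x$ within its bounds shifts $\olb[i+1],\oub[i+1]$ by exactly $\pm x$ and leaves $\p_{i+2},\dots,\p_k$ untouched once $\price_{i+1}$ is re-normalized; hence only three discrete options per vertex survive (sell at the lower bound, renounce to boost the successor, or sell at the upper bound by sacrificing the successor), and the third option's deferred comparison is what forces the algorithm to keep \emph{two} explicit open branches rather than one collapsed value function. Your remark that ``the slack is really governed by two competing quantities'' gestures at the right invariant (the paper's running $\max$ and $\min$ of the gap), but you never pin it down.

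Separately, the obstacle you single out as the crux --- long runs of consecutive priceable vertices blowing up the number of linear pieces --- is a red herring: two adjacent priceable vertices force the follower to buy at least one of them at any price, so the leader's revenue is unbounded and such instances are excluded from the model at the outset. The real difficulty is not within priceable blocks but \emph{between} priceable vertices lying on opposite sides of the bipartition, where raising one price enlarges the feasible price range of the next and creates the genuine include/sacrifice trade-off; your proposal acknowledges this interaction only implicitly through the slack bookkeeping and does not resolve it.
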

	\vspace*{-1.5em}
	
	\paragraph{Related work.}
	\label{subsec:relatedWork}
	In this paragraph, we provide a brief overview of previous research on combinatorial problems in the Stackelberg framework.
	
	Early research in the context of Stackelberg network pricing games was mostly concerned with edge-connectivity problems such as \ac{SP} \cite{10.1145/1386790.1386802,10.1007/978-3-642-17572-5_37,Joret_2010,doi:10.1287/mnsc.44.12.1608,10.5555/1071747.1071751}, \ac{MST-Problem} \cite{BILO2015643,10.1007/978-3-540-73951-7_7,Cardinal_StackelbergMST_PlanarTW}, and \acl{SPT} \cite{10.1007/978-3-540-92185-1_32,https://doi.org/10.48550/arxiv.1207.2317}. Most of the results are concerned with hardness even \wrt approximability. For instance, to our best knowledge, the strongest hardness result for \ac{StackSP} is the impossibility to approximate \ac{StackSP} by a factor better than (2-$\epsilon$) in polynomial time (assuming \textsf{P} $\neq$ \textsf{NP}) from Briest et al~\cite{10.1007/978-3-642-17572-5_37}. Similarly, Cardinal et al.~\cite{10.1007/978-3-540-73951-7_7} proved \textsf{APX}-hardness of \acs{StackMST}. There are also some results on the positive side but most of them require rather severe restrictions to the input. Possibly the most surprising general result is that a single-price strategy, \ie a pricing scheme that assigns a uniform price to every priceable entity, suffices to obtain a logarithmic approximation factor \cite{10.1145/1386790.1386802,BriestEtAl_NetworkPricing}. This approximation guarantee remains -- to our best knowledge -- best-to-date.
	
	A recent study by Cristi and Schroeder~\cite{CRISTI202299} analyzes the effect of allowing the leader to set negative prices in \ac{StackSP}. Since the influence of an edge goes beyond its direct neighbors, setting negative prices can actually increase the leader's overall revenue. Yet, it is easy to see that this is not true for \ac{StackVC}. Since including a vertex with weight less or equal than zero is always beneficial, a negative price cannot be better than selling the vertex for free. 
	
	Research on Stackelberg games is not limited to edge-selection problems, however. For instance, \cite{https://doi.org/10.1002/net.20457,PFERSCHY2019149,PFERSCHY202118} examine \ac{KP} while \cite{BipartiteVCpreflow,BriestEtAl_NetworkPricing,StackMaxClosureOMS} study \acf{WVC-Problem} in the Stackelberg setting. Since both these problems are already \textsf{NP}-hard in general \cite{Karp1972}, these publications either restrict to certain graph classes such as \ac{StackVC} on bipartite graphs, or in the case of \acs{StackKP}, assume that the follower uses an approximation algorithm to solve his problem.
	Böhnlein et al.~\cite{Boehnlein2021} extend the concept of Stackelberg games even further, allowing the follower to optimize an arbitrary convex program and thus capture non-binary preferences.
	
	As we stated above, we restrict our analysis to the single-follower setting. In fact, among the references mentioned above, only a few consider the multi-follower setting. Some examples of this are \cite{BILO2015643,BriestEtAl_NetworkPricing,CRISTI202299,StackMaxClosureOMS,doi:10.1287/mnsc.44.12.1608,10.5555/1071747.1071751}. Note that, in the multi-follower setting, one additionally needs to distinguish different variants of the model depending on whether or not the leader can sell copies of the same resource to multiple followers simultaneously and potentially obtain revenue for every copy. We refer to \cite{StackMaxClosureOMS} for further details on this. 
	\vspace*{-.75em}
	
\section{The Model}	
	\vspace*{-.5em}
	\label{sec:Model}
	Let $\generalGraph=(V,E)$ be an undirected graph. Recall that $C\subseteq V$ is a \emph{vertex cover} if every edge has at least one endpoint in $C$. Given weights $\omega \colon V\to \mathbb{R}_{\geq 0}$, a \emph{\ac{MWVC}} is a vertex cover $C^*$ of minimal weight $\omega(C^*):=\sum_{v\in C^*} \omega(v)$.
	
	In \acf{StackVC} the vertex set is partitioned into $V= F \disjcup P$, where $F$ is the set of \textit{fixed-price} vertices, and $P$ is the set of 
	\textit{priceable} vertices.
	Initially, weights $\omega \colon F\to \mathbb{R}_{\geq 0}$ are only assigned to the fixed-price vertices.
	The vertices in $P$ are under the control of the leader, who sets prices $\pi \colon P\to \mathbb{R}_{\geq 0}$ in the first stage of the game.
	Afterward, in the second stage of the game, the follower selects a \ac{MWVC} with respect to the weights $\omega \colon V\to \mathbb{R}_{\geq 0}$, where the weight of every priceable vertex corresponds to the price set by the leader, \ie  $\omega(p)=\pi(p)$ for all $p\in P$.		 		 
	
	In \ac{StackVC} the leader only obtains revenue for the priceable vertices contained in the follower's solution and aims to maximize her revenue. Both players are assumed to act fully rationally. Neighboring priceable vertices would force the follower to include at least one of them into a \ac{MWVC} and thus, allow the leader to generate an arbitrarily high revenue. We therefore restrict our analysis to instances without adjacent priceable vertices.
	Similarly, vertices with a negative price will be included in any \ac{MWVC} and can thus be removed by deleting all incident edges.
	
	To circumvent infinitely small $\epsilon$-values, we assume ties to be broken in favor of the leader, \ie the follower chooses a \ac{MWVC} that maximizes the leader's revenue. This assumption is common in Stackelberg literature. \acf{StackVC} then describes the optimization problem from the leader's perspective.  That is, for a game on graph $\generalGraph=(V = F \disjcup P,\,E,\,\omega)$, and initial weights $\omega: F \rightarrow \mathbb{R}$ on the fixed-price vertices, we want to solve:
	\[
	\max_{\pi\in \mathbb{R}^P} \left\{\sum_{p\in C^*\cap P} \pi(p) \; \Bigg \vert \; C^* \mbox{  is  MWVC of }\generalGraph \mbox{ with }\omega(p)=\pi(p)\ \text{ for all } p\in P\right\}.
	\]
	See Figure~\ref{fig:StackVC-Example} below for an exemplary instance of \ac{StackVC} on a path. 
	
	\paragraph{\ac{StackVC} on a path.}
	Unless stated otherwise, we always assume to be dealing with a path aka.\ line graph $\pathGraph =(V,E)$ with $n$ vertices $v_1,\dots,v_n$. \Wlog we choose an arbitrary orientation of $\pathGraph$ and number the vertices from left to right, \ie $\pathGraph = (v_1, \dots, v_n)$. This provides us with a total order $\preceq$ of the vertices in $V$. That is, we have $v_i \preceq v_j$ whenever $i \leq j$. In this case, we call $v_i$ a predecessor of $v_j$ and similarly $v_j$ a successor of $v_i$. In case we want to stress that $v_i \neq v_j$, \ie $i < k$, we write $v_i \prec v_j$.
	
	Let $\p_1,\dots, \p_k$ denote the set of priceable vertices. We assume the priceable vertices to be numbered by their appearance on $\pathGraph$, \ie $i < j$ if $p_i \prec p_j$. To improve readability, we may write $\price_i$ instead of $\price(\p_i)$ from hereon.
	
	We denote the direct predecessor of $v_i$ by $\pred(v_i)$, \ie $\pred(v_i) \coloneqq v_{i-1}$ for $i \geq 2$. 
	Similarly, we define the direct successor of $v_i$ by $\psucc(v_i) \coloneqq v_{i+1}$ for $i \leq n-1$.

	To simplify the notion of subpaths, we use $\pathGraph_{\prec v}$ to denote the subpath of $\pathGraph$ from $v_1$ to the direct predecessor $\pred(v)$ of $v = v_\ell $. That is,
	$\pathGraph_{\prec v} \coloneqq (v_1, \dots, v_{\ell-1})$. For the subpath containing all vertices until $v$, but including $v$, we use $\pathGraph_{\preceq v} \coloneqq (v_1, \dots, v_{\ell})$.
	Analogously, for a set $S \subseteq V$, we define the sets $S_{\preceq v}$ and $S_{v \preceq}$ [$S_{\prec v}$ and $S_{v \prec}$] as the elements in $S$ which are [proper] predecessors or successors of vertex $v$, respectively.
	Note that every path is a bipartite graph, where the parts correspond to the nodes with odd and even distance to $v_1$, respectively. We define $\Partition{v}$ as the set of vertices that are on the same side of the bipartition as $v$. Remark that we do not apply any restrictions on the distribution of $P$ \wrt the bipartition of $V$. 
	
	\begin{figure}[thbp]
	\centering
	\resizebox{.9\textwidth}{!}{
		\centering
		\begin{tikzpicture}[
			every node/.style ={
				minimum size=.55cm,
				inner sep = 1pt,
				font = \small,
				draw,
				fill = graphcolorpeach
			},
			label distance = -.1mm
			]
			\node [cover, aPartition, label = above:{\scriptsize $v_1$}] at (0,0) (v0) {$1$};
			\node [right = .53cm of v0, bPartition, label = above:{\scriptsize$v_2$}] (v1) {$5$};
			\node [cover, right = .53cm of v1, label = above:{\scriptsize$p_1$}, aPartition] (v2) {
				$\price_1$
			};
			\node [right = .53cm of v2, label = above:{\scriptsize$v_4$}, bPartition] (v3) {$9$};
			\node [cover, right = .53cm of v3, label = above:{\scriptsize$v_5$}, aPartition] (v4) {$8$};
			\node [right = .53cm of v4, label = above:{\scriptsize$p_2$}, bPartition] (v5) {
				$\price_2$
			};
			\node [cover, right = .53cm of v5, label = above:{\scriptsize$v_7$}, aPartition] (v6) {$3$};
			\node [cover, right = .53cm of v6, label = above:{\scriptsize$v_8$}, bPartition] (v7) {$2$};
			\node [aPartition, right = .53cm of v7, label = above:{\scriptsize$v_9$}] (v8) {$6$};
			
			\draw [] (v0) -- (v1);
			\draw [] (v1) -- (v2);
			\draw [] (v2) -- (v3);
			\draw [] (v3) -- (v4);
			\draw [] (v4) -- (v5);
			\draw [] (v5) -- (v6);
			\draw [] (v6) -- (v7);
			\draw [] (v7) -- (v8);
		\end{tikzpicture}%
	}
	\caption{A \ac{StackVC} instance with two priceable vertices. The \ac{MWVC} $C^*$ corresponding to the optimal pricing scheme $\mathbf{p}^* = (13, \infty)$ is highlighted in magenta.}
	\label{fig:StackVC-Example}
\end{figure}
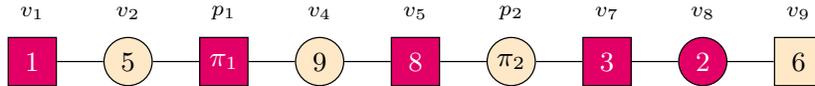
	
	\begin{example}
		\label{ex:StackVC-Example}
		We illustrate the intrinsic complexity of \ac{StackVC} on paths using the example depicted in Figure~\ref{fig:StackVC-Example}. 
		
		There are two priceable vertices $p_1$ and $p_2$ that lie on different sides of the partition which is indicated by the round and square vertices. Depending on the choice of $\price_1$ and $\price_2$ either $p_1$, $p_2$, both, or none of them are included in the follower's solution. Remark that despite its very limited size, finding a pricing scheme that yields the maximum revenue of 13, in this case, is far from trivial.
		
		For $\price_1 \leq 5$, there is a \ac{MWVC} including $\p_1$ independent of the choice of $\price_2$. This is because $\{v_1, \p_1, v_5\}$ is a \ac{MWVC} of $\pathGraph_{\preceq v_5}$ for this choice of $\price_1$. Conversely, for $\price_1 > 13$, $\{v_2, v_4\}$ is a \ac{MWVC} of $\pathGraph_{\preceq v_4}$ and thus no \ac{MWVC} of $\pathGraph$ contains $\p_1$. For any price in between, either $\p_1$ or $\p_2$ is included in the follower's solution. Therefore, it makes sense to evaluate the same cases for $\p_2$ for the choices $\price_1 = 5$ and $\price_1 = 13$. We refer to these prices as bounds and indeed they constitute the foundation of our algorithm for \ac{StackVC} on a path. Yet, one can clearly see that even with only two options per priceable vertex, simply evaluating every combination is not viable. By leveraging the problem's structure, we are, however, able to circumvent such an exponential blowup.
		
		For this particular instance, $\price_1 = 13,\, \price_2 \ge 11$ (e.g.\ $\price_2=\infty$) yields the maximum possible revenue of $13$. Note that under this pricing scheme, the leader does not obtain any revenue for $\p_2$.
	\end{example}
	
	\section{Sketch of the Algorithm}
	\label{sec:Algorithm}
	In this section, we sketch how \Cref{alg:StackVCP_sketch} (see pseudocode below) finds optimal prices for the leader given that the game is played on a path. In the subsequent sections, we describe the steps of the algorithm in further detail.
	
	Before we start describing our algorithm, let us shortly think how a \ac{MWVC} of a path can be computed in the setting where \emph{all} vertices are fixed-price vertices. Clearly, the problem is polytime solvable, since bipartite graphs allow for an efficient computation of a \ac{MWVC} (Theorem of K\H{o}nig-Egerv\'ary \cite{egervary1931kombinatorische,KoenigThm1931}), and clearly every path is bipartite.
	In fact, one can even compute a \ac{MWVC} of a path in linear time via the following dynamic programming approach:
	Traverse the path $\pathGraph$ from left to right and for every node $v$, check if there is a cheapest cover of the subpath $\pathGraph_{\preceq v}$ that includes $v$. If so, cut the path and repeat the procedure for the remaining path. The set $C^*$ containing $v$ and every second predecessor of it is a \ac{MWVC} of $\pathGraph_{\preceq v}$. Note that, if the respective cheapest covers of $\pathGraph_{\preceq v}$ in- and excluding $v$ have the same cost, we may either cut or continue.
	
	In the Stackelberg setting, however, traversing the path in one direction and setting the prices does not yield optimal prices in general.
	Indeed, the leader obtains revenue for vertex $\p$ if there is a \ac{MWVC} of the subpath up to $\p$ that includes $\p$. Still, in most cases, $\p$ is included in a cover of the entire path for a price beyond that. This can be nicely observed in the instance depicted in Figure~\ref{fig:StackVC-Example}.  The maximum price for $\p_1$ such that it is included in a \ac{MWVC} of $\pathGraph_{\preceq p_1}$ is $4$. Yet, it can be included for a price as high as $13$ (when choosing $\price_2$ accordingly).
	The primary challenge in \ac{StackVC} is, however, that the priceable vertices influence each other. If all priceable vertices are on one side of the bipartition, there is always an optimal solution including all priceable vertices. This vastly simplifies the problem. In fact, it turns the problem from \textsf{NP}-hard \cite{StackMaxClosureOMS} to polytime solvable \cite{BriestEtAl_NetworkPricing} on general bipartite graphs. 
	The reason for this is that for instances with priceable vertices on both sides of the bipartition, the leader must first decide which priceable vertices she wants to be included in the follower's solution and then maximize her revenue given this selection. Her problem thus becomes twofold: Find an optimal selection of priceable vertices and then maximize prices \wrt this selection. The instance discussed in Example~\ref{ex:StackVC-Example} illustrates this nicely.
	\vspace*{-1em}
	
	\paragraph{Overview.}
	Due to the challenges described above, simple dynamic programming is not sufficient for solving \ac{StackVC} on a path. 
	Nevertheless, we can employ it to compute lower and upper bounds for the priceable vertices (cf.\ Section~\ref{sec:cutting+bounds}). Note that the bounds of $\p_i$ depend on the prices $\price_1, \dots, \price_{i-1}$ chosen for the previous priceable vertices. Consequently, a careful analysis is necessary, as we discuss later. We use these bounds for a benchmark case to tackle the combinatorial challenge of finding the optimal selection of priceable vertices that the leader wants the follower to include in his \ac{MWVC}. We define the lower bound in a way such that matching the lower bound of a priceable vertex $\p_i$ guarantees that it is included in the follower's solution, \ie the leader obtains the price of $\p_i$ as revenue. When setting the price to match the upper bound, $\p_i$ can still be included by reducing the revenue obtained for the next priceable vertex $\p_{i+1}$.  However, this is not always desirable. Therefore, we distinguish these options for each priceable vertex (cf.\ Section~\ref{subsec:options}). To avoid an exponential blowup, we perform a sophisticated analysis to discard suboptimal options as soon as possible (cf.\ Sections ~\ref{subsec:AppraisingRevenue} and~\ref{subsec:comparing_options}). By doing so, we can restrict ourselves to two potential pricing schemes at any time, giving us the desired linear complexity. In the end, we calculate the optimal prices for the leader by choosing the prices according to the optimal pricing scheme (given by the optimal option for each priceable vertex).
	
	\begin{algorithm}[tbph]
	\algorithmInit
	\Fn{\StackVCP{$\pathGraph$}}{	
		\KwIn{A path $\pathGraph = (v_{1}, \dots, v_{n})$ with $k$ priceable vertices $p_1, \dots, p_k$ and vertex weights $\omega$ for the fixed-price vertices.}
		\KwOut{A revenue-maximizing pricing scheme $\price^* \in \mathbb{R}^k$.}
		\For{$i=1$ \To $k$}{
			$\olb[i]$, $\oub[i]$ = \ComputeBounds{$\pathGraph$, $\olb[1]$, $\dots$, $\olb[i-1]$} \hspace{-3ex} \tcp*[r]{cf.\ Section~\ref{sec:cutting+bounds}}
		}
		\tcp*[r]{compute optimal selection of priceable vertices}
		opt = \CompareOptions{$\pathGraph$, $\olb$, $\oub$}  \hspace{-10ex} \tcp*[r]{cf.\ Section~\ref{subsec:comparing_options}}
		\tcp*[r]{Compute optimal prices for given selection}
		$\price_1^*, \dots, \price_k^*$ = \ResolvePrices{$\pathGraph$, \textup{\texttt{opt}}} \tcp*[r]{cf.\ Section~\ref{subsec:comparing_options}}
		\KwRet{$\price^*$}
	}
	\setstretch{1}
	\caption{The outline of our algorithm for \ac{StackVC} on a path.}
	\label{alg:StackVCP_sketch}
\end{algorithm}
	
	\section{Structure of Minimum Weight Vertex Covers}
	\label{sec:in_cover}
	In this section, we delve into the general structure of \acp{MWVC} of a path. We start with a simple lemma characterizing the cornerstones of such a cover.
	\begin{lemma}[Cover condition]
		\label{lem:in_cover}
		Let $v$ be the first vertex on $\pathGraph$ for which the following condition holds.
		\begin{equation}
			\label{eq:cover_condition}
			\tag{cover condition}
			\qquad \qquad \sum_{u \in \pathGraph_{\preceq v} \cap \Partition{v}} \omega(u) \leq \sum_{u \in \pathGraph_{\preceq v} \setminus \Partition{v}} \omega(u).
		\end{equation}
		Then, $C^* \coloneqq \Partition{v} \cap \pathGraph_{\preceq v}$ is a \ac{MWVC} of $\pathGraph_{\preceq v}$. Moreover, there is a \ac{MWVC} of $\pathGraph$ containing $C^*$.
		If the inequality is strict, $C^*$ is the unique \ac{MWVC} of $\pathGraph_{\preceq v}$ and every \ac{MWVC} of $\pathGraph$ includes $C^*$.
	\end{lemma}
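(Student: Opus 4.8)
The plan is to first reduce the statement to a fact about maximum‑weight independent sets of the prefix $\pathGraph_{\preceq v}$, prove that by induction, and then lift it to $\pathGraph$ by cutting the path at $v$. Write $v=v_\ell$, so $\pathGraph_{\preceq v}=(v_1,\dots,v_\ell)$, and for $1\le j\le\ell$ let $M_j$ be the total $\omega$‑weight of $\Partition{v_j}\cap\pathGraph_{\preceq v_j}$, i.e.\ of the bipartition class containing $v_j$ restricted to $\pathGraph_{\preceq v_j}$; set $M_0\coloneqq 0$. On any path both bipartition classes are vertex covers and they partition the vertices, so $\Partition{v_j}\cap\pathGraph_{\preceq v_j}$ is a vertex cover of $\pathGraph_{\preceq v_j}$ whose complement $\pathGraph_{\preceq v_j}\setminus\Partition{v_j}$ is an independent set; this complement equals $\Partition{v_{j-1}}\cap\pathGraph_{\preceq v_{j-1}}$ for $j\ge2$ (the vertex $v_j$ is not in it), hence has weight $M_{j-1}$, and $M_j=M_{j-2}+\omega(v_j)$ for $j\ge2$. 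Thus the cover condition at $v_j$ is exactly ``$M_j\le M_{j-1}$'', so the hypothesis that $v=v_\ell$ is the \emph{first} such vertex is equivalent to $M_0<M_1<\dots<M_{\ell-1}$ together with $M_\ell\le M_{\ell-1}$ (the strict chain being vacuous if $\ell=1$). Since $C^*=\Partition{v}\cap\pathGraph_{\preceq v}$ is a vertex cover of $\pathGraph_{\preceq v}$ whose complement $\overline{C^*}\coloneqq\pathGraph_{\preceq v}\setminus\Partition{v}$ is an independent set of weight $M_{\ell-1}$, and since a minimum‑weight vertex cover is the complement of a maximum‑weight independent set, it suffices to show that $M_{\ell-1}$ is the maximum weight of an independent set of $\pathGraph_{\preceq v}$, and — when $M_\ell<M_{\ell-1}$ — that $\overline{C^*}$ is the only such set.

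The key claim I would prove by induction on $j$ is: for every $j$ with $1\le j\le\ell-1$, the set $\Partition{v_j}\cap\pathGraph_{\preceq v_j}$ is the \emph{unique} maximum‑weight independent set of $\pathGraph_{\preceq v_j}$, of weight $M_j$. The base case $j=1$ holds because $M_0<M_1$ forces $\omega(v_1)>0$. For the step, use the standard dichotomy for a path: an independent set $I$ of $(v_1,\dots,v_j)$ either avoids $v_j$, and is then an independent set of $(v_1,\dots,v_{j-1})$ of weight $\le M_{j-1}$ by induction, or contains $v_j$, hence avoids $v_{j-1}$, so $I\setminus\{v_j\}$ is an independent set of $(v_1,\dots,v_{j-2})$ (the empty path when $j=2$) of weight $\le M_{j-2}$, giving $\omega(I)\le M_{j-2}+\omega(v_j)=M_j$. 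Because $M_{j-1}<M_j$, the first case never reaches the value $M_j$ attained by $\Partition{v_j}\cap\pathGraph_{\preceq v_j}$; hence every maximum‑weight independent set of $(v_1,\dots,v_j)$ contains $v_j$, has weight $M_j$, and — after removing $v_j$ — restricts to a maximum‑weight independent set of $(v_1,\dots,v_{j-2})$, which by induction is uniquely $\Partition{v_{j-2}}\cap\pathGraph_{\preceq v_{j-2}}$; so $I=\Partition{v_j}\cap\pathGraph_{\preceq v_j}$, as claimed.

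Applying the same dichotomy at $j=\ell$: an independent set of $\pathGraph_{\preceq v}$ avoiding $v_\ell$ has weight $\le M_{\ell-1}$ (by the claim), and one containing $v_\ell$ has weight $\le M_{\ell-2}+\omega(v_\ell)=M_\ell\le M_{\ell-1}$; since $\overline{C^*}$ realises $M_{\ell-1}$, the maximum weight is $M_{\ell-1}$ and $C^*$ is a minimum‑weight vertex cover of $\pathGraph_{\preceq v}$. If $M_\ell<M_{\ell-1}$, no maximum‑weight independent set contains $v_\ell$, so it is a maximum‑weight independent set of $(v_1,\dots,v_{\ell-1})$, uniquely $\overline{C^*}$, whence $C^*$ is the unique minimum‑weight vertex cover of $\pathGraph_{\preceq v}$. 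For $\pathGraph$ itself: any vertex cover of $\pathGraph$ splits over the disjoint sets $\{v_1,\dots,v_\ell\}$ and $\{v_{\ell+1},\dots,v_n\}$ into a vertex cover of $\pathGraph_{\preceq v}$ and one of $(v_{\ell+1},\dots,v_n)$, so the minimum vertex‑cover weight of $\pathGraph$ is at least the sum $s$ of the two minima; conversely, because $v=v_\ell\in C^*$ already covers the edge $\{v_\ell,v_{\ell+1}\}$, the union of $C^*$ with any minimum‑weight vertex cover of $(v_{\ell+1},\dots,v_n)$ is a vertex cover of $\pathGraph$ of weight $s$, hence a minimum‑weight vertex cover of $\pathGraph$ containing $C^*$. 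When $M_\ell<M_{\ell-1}$, achieving weight $s$ forces the restriction of every minimum‑weight vertex cover of $\pathGraph$ to $\{v_1,\dots,v_\ell\}$ to be a minimum‑weight vertex cover of $\pathGraph_{\preceq v}$, i.e.\ exactly $C^*$; so each of them contains $C^*$. (If $\ell=n$ the second subpath is empty, $C^*$ is a minimum‑weight vertex cover of $\pathGraph=\pathGraph_{\preceq v}$, and there is nothing further to prove.)

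Everything after the key claim is routine bookkeeping; the only genuinely delicate point is the induction, where one must make sure the hypothesis ``$v_\ell$ is the \emph{first} vertex meeting the cover condition'' — equivalently the strict chain $M_0<M_1<\dots<M_{\ell-1}$ — is precisely what is needed to force a \emph{unique} optimal independent set on every proper prefix, together with a quick check of the degenerate cases $\ell=1$ and $j=2$ (where the relevant sub‑path is empty).
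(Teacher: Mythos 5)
Your proof is correct, but it takes a genuinely different route from the paper's. The paper stays entirely on the vertex-cover side: it first observes that the only vertex covers of $\pathGraph_{\preceq v}$ containing no two adjacent vertices are the two bipartition classes $C^*$ and $\pathGraph_{\preceq v}\setminus C^*$, and then rules out any minimum cover containing two adjacent vertices by an exchange argument --- if $x\prec y$ is the first adjacent pair in such a cover $C$, replacing $C_{\preceq x}$ by $\pathGraph_{\prec x}\cap\Partition{y}$ gives a strictly cheaper cover precisely because no vertex before $v$ satisfies the cover condition. This reduces the whole lemma to comparing the weights of the two alternating classes, which is literally the cover condition. You instead pass to the complementary maximum-weight independent set problem and run an induction along the prefixes $v_1,\dots,v_{\ell-1}$, using the equivalence of ``$v_\ell$ is the first vertex satisfying the condition'' with the strict chain $M_0<\dots<M_{\ell-1}$ to force a \emph{unique} optimum on every proper prefix. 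Your argument is longer but makes fully explicit exactly where the ``first vertex'' hypothesis is used and why uniqueness propagates (the paper's uniqueness claim in the strict case is asserted rather tersely); the paper's exchange argument is shorter and avoids the induction, at the cost of leaving the classification of adjacent-pair-free covers and the improvement step to the reader. Both proofs handle the extension from $\pathGraph_{\preceq v}$ to $\pathGraph$ the same way, by cutting the path after $v$ and noting that $v\in C^*$ already covers the bridging edge, and your additivity argument for why the strict case forces $C^*$ inside every minimum cover of $\pathGraph$ matches the paper's replacement argument in substance.
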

	
	\noindent
	We call the inequality \textit{(weak) \ref{eq:cover_condition}} because it allows us to construct a cover for $\pathGraph_{\preceq v}$. We speak of the \textit{strict \ref{eq:cover_condition}} whenever the inequality is strict and thus, $C^*$ is contained in any \ac{MWVC} of the entire path $\pathGraph$. In the Stackelberg setting, fulfilling the weak \ref{eq:cover_condition} often suffices to achieve the desired effect in terms of in- or excluding a specific vertex since we generally assume the follower to pick a \ac{MWVC} in favor of the leader.
	\begin{proof}
		Observe that $C^*$ and $\pathGraph_{\preceq v} \setminus C^*$ are the only vertex covers of  $\pathGraph_{\preceq v}$ that do not contain any neighboring vertices. Since the \ref{eq:cover_condition} holds at vertex $v$, we get
		\[
		\omega(C^*) \leq \omega(\pathGraph_{\preceq v}  \setminus C^*).
		\]
		
		For the sake of contradiction, assume there was a \ac{MWVC} $C$ of $\pathGraph_{\preceq v}$ containing two neighboring vertices. \Wlog let $x \prec y$ be the first neighboring vertices in $C$. Then,  $C' \coloneqq \bigl(\pathGraph_{\prec x} \cap \Partition{y}\bigr) \cup C_{y \preceq}$ is a vertex cover of $\pathGraph_{\preceq v}$ which is strictly cheaper than $C$, since
		\begin{align*}
			\omega\bigl(\pathGraph_{\prec x} \cap \Partition{y}\bigr)  < \omega\bigl(\pathGraph_{\prec x} \cap \Partition{x}\bigr)  = \omega \bigl(C_{\preceq x}\bigr),
		\end{align*}		
		where the inequality holds since $x$ does not satisfy the \ref{eq:cover_condition}.
		Hence, $C^*$ is a \ac{MWVC} of $\pathGraph_{\preceq v}$ which contains $v$. Furthermore, $C^*$ is the unique minimal cover if the strict \ref{eq:cover_condition} is fulfilled.
		
		Since a vertex cover of the entire path $\pathGraph$ must cover the edges in $\pathGraph_{\preceq v} \cup \{v, \psucc(v)\} \cup \pathGraph_{v \prec}$, each \ac{MWVC} $C_{v \prec}^*$ of $\pathGraph_{v \prec}$ completes $C^*$ to a \ac{MWVC} of $\pathGraph$ since $C^*$ already covers $\pathGraph_{\preceq v} \cup \{v, \psucc(v)\}$.
		
		To see that $C^*$ is contained in every \ac{MWVC} $\tilde{C}$ in case of strict inequality, note that $\tilde{C}$ has to cover $\pathGraph_{\preceq v}$ as well. Moreover, if $\tilde{C}_{v \preceq} \neq C^*$ replacing it with $C^*$ would result in a cheaper vertex cover.
	\end{proof}	
	
	\begin{corollary}
		\label{cor:cutting}
		We can split an instance $\pathGraph$ at any node $v$ satisfying the (weak) \ref{eq:cover_condition}. That is, we can cut the path after $v$, find a \ac{MWVC} for $\pathGraph_{\preceq v}$ including~$v$, and independently compute a \ac{MWVC} for $\pathGraph_{v \prec}$.
	\end{corollary}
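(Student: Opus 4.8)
The plan is to obtain this directly from \Cref{lem:in_cover}. That lemma already gives us the two ingredients we need: the explicit cover $C^* \coloneqq \Partition{v} \cap \pathGraph_{\preceq v}$ is a \ac{MWVC} of $\pathGraph_{\preceq v}$ that contains $v$, and there exists a \ac{MWVC} of the whole path $\pathGraph$ containing $C^*$. So the only thing left to check is that gluing an optimal cover of the left part that contains $v$ to an optimal cover of the right part $\pathGraph_{v\prec}$ produces an optimal cover of $\pathGraph$, and conversely that no \ac{MWVC} of $\pathGraph$ can beat such a glued solution; i.e.\ I want to show that the minimum cover weight of $\pathGraph$ decomposes additively across the cut and that the left summand is witnessed by a cover through $v$.

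First I would prove the ``$\le$'' direction. Let $C^*$ be the cover from \Cref{lem:in_cover} and let $D$ be any \ac{MWVC} of $\pathGraph_{v\prec}$. The edge set of $\pathGraph$ is the disjoint union of the edges inside $\pathGraph_{\preceq v}$, the bridge edge $\{v,\psucc(v)\}$, and the edges inside $\pathGraph_{v\prec}$; since $v\in C^*$ covers the bridge edge, $C^* \disjcup D$ is a vertex cover of $\pathGraph$. As $C^*\subseteq\pathGraph_{\preceq v}$ and $D\subseteq\pathGraph_{v\prec}$ are vertex-disjoint, its weight is $\omega(C^*)+\omega(D)$, so the minimum cover weight of $\pathGraph$ is at most the sum of the two sub-optima. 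For the matching ``$\ge$'' direction I would take an arbitrary \ac{MWVC} $C$ of $\pathGraph$ and split it as $C = C_{\preceq v}\disjcup C_{v\prec}$ using the notation $S_{\preceq v}$, $S_{v\prec}$ from above: each piece covers all edges of the corresponding subpath, so $\omega(C)=\omega(C_{\preceq v})+\omega(C_{v\prec})$ is at least the sum of the sub-optima. Combining the two bounds shows equality, and the ``$\le$'' construction exhibits an optimal cover of the claimed form $C^*\disjcup D$ with $v\in C^*$. (Equivalently, one may restrict the global \ac{MWVC} containing $C^*$ guaranteed by \Cref{lem:in_cover} to $\pathGraph_{v\prec}$, which yields a cover of the right part of weight at least $\omega(D)$, giving the same conclusion.)

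I do not expect a genuine obstacle here: the corollary is essentially a bookkeeping restatement of \Cref{lem:in_cover}. The one point that deserves an explicit word is the additivity of the weights across the cut combined with the fact that choosing a left cover that \emph{includes} $v$ is precisely what makes the bridge edge $\{v,\psucc(v)\}$ cost nothing on the right, so that the two subproblems genuinely decouple and can be solved independently; beyond that, everything reduces to the vertex-disjointness of $\pathGraph_{\preceq v}$ and $\pathGraph_{v\prec}$.
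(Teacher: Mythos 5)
Your proposal is correct and follows essentially the same route as the paper: the final paragraph of the proof of Lemma~\ref{lem:in_cover} already observes that any \ac{MWVC} of $\pathGraph_{v\prec}$ completes $C^*$ to a \ac{MWVC} of $\pathGraph$ because $C^*$ covers $\pathGraph_{\preceq v}$ together with the bridge edge $\{v,\psucc(v)\}$, and the paper then states the corollary without further proof. Your explicit two-sided additive decomposition across the cut is just a careful spelling-out of that same gluing argument.
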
	
	Given that, the following statement follows immediately.
	\begin{corollary}
		\label{cor:VC}
		Let $v_1 \prec \dots \prec v_\ell$ be all the vertices on $\pathGraph_{\preceq v_{\ell}}$ satisfying the \ref{eq:cover_condition} such that only $v_\ell$ may satisfy the strict \ref{eq:cover_condition}. Let $S_i \coloneqq \{u \in \pathGraph \mid v_{i-1} \prec u \prec v_{i}\}$ denote the set of vertices between $v_{i-1}$ and $v_i$. Furthermore, let $C^*$ be a \ac{MWVC} of $\pathGraph$. Then, $(S_i \cap \Partition{v_i}) \subseteq C^*$ if and only if $v_i \in C^*$. Moreover, if $v_{\ell}$ satisfies the strict \ref{eq:cover_condition}, $v_{\ell}$ is contained in every \ac{MWVC} of $\pathGraph$.
	\end{corollary}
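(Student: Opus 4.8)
The plan is to derive the corollary by iterating \Cref{cor:cutting}: cut $\pathGraph$ at all of $v_1 \prec \dots \prec v_\ell$ at once, and then analyse each resulting piece with \Cref{lem:in_cover}.

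First I would set up the decomposition. By \Cref{cor:cutting} we may cut $\pathGraph$ after $v_1$, the first vertex satisfying the \ref{eq:cover_condition}. Since each of $v_1, \dots, v_{\ell-1}$ fulfils the \ref{eq:cover_condition} \emph{with equality} (by hypothesis only $v_\ell$ may be strict), deleting a prefix that ends at such a vertex leaves the difference of the two sides of the \ref{eq:cover_condition} unchanged at every later vertex. Hence $v_2$ is still the first vertex of $\pathGraph_{v_1 \prec}$ satisfying the \ref{eq:cover_condition}, we may cut there as well, and inductively we obtain a partition of $V(\pathGraph)$ into the consecutive subpaths $Q_1 := \pathGraph_{\preceq v_1}$, the segments $Q_i := S_i \cup \{v_i\}$ for $2 \le i \le \ell$, and the tail $\pathGraph_{v_\ell \prec}$; in each $Q_i$ the last vertex is $v_i$, and $v_i$ satisfies the \ref{eq:cover_condition} relative to $Q_i$. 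The exchange argument behind \Cref{cor:cutting} then gives: for \emph{every} \ac{MWVC} $C^*$ of $\pathGraph$, the restriction $C^* \cap Q_i$ is a \ac{MWVC} of the path $Q_i$.

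Next I would pin down the structure inside a single segment. Applying \Cref{lem:in_cover} to the path $Q_i$ -- whose first vertex satisfying the \ref{eq:cover_condition} is its last vertex $v_i$ -- the proof of that lemma shows that any \ac{MWVC} of $Q_i$ has no two neighbouring vertices and hence equals one of the only two such covers of $Q_i$: either $C_i^+ := \Partition{v_i} \cap Q_i$, which contains $v_i$ and satisfies $C_i^+ \setminus \{v_i\} = S_i \cap \Partition{v_i}$, or $C_i^- := Q_i \setminus \Partition{v_i}$, which contains neither $v_i$ nor any vertex of $S_i \cap \Partition{v_i}$. Combining the two steps, for a \ac{MWVC} $C^*$ of $\pathGraph$: if $v_i \in C^*$ then $C^* \cap Q_i = C_i^+ \supseteq S_i \cap \Partition{v_i}$; and if $S_i \cap \Partition{v_i} \subseteq C^*$ then $C^* \cap Q_i$ meets $\Partition{v_i}$, so it cannot be $C_i^-$ and therefore equals $C_i^+ \ni v_i$. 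This is precisely the claimed equivalence, and the ``moreover'' part is immediate from the last sentence of \Cref{lem:in_cover}: if $v_\ell$ satisfies the strict \ref{eq:cover_condition}, then $\Partition{v_\ell} \cap \pathGraph_{\preceq v_\ell}$, and in particular $v_\ell$, lies in every \ac{MWVC} of $\pathGraph$.

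The only non-routine step is the first one -- checking that after cutting at $v_{i-1}$ the vertex $v_i$ is again the \emph{first} vertex of the shortened path satisfying the \ref{eq:cover_condition}, so that \Cref{lem:in_cover} and \Cref{cor:cutting} apply verbatim to $Q_i$ -- and this is exactly where the equality form of the \ref{eq:cover_condition} at $v_1, \dots, v_{\ell-1}$ enters. (The degenerate case $S_i \cap \Partition{v_i} = \emptyset$, i.e.\ $v_{i-1}$ and $v_i$ nearly adjacent, makes the equivalence vacuous on one side but is harmless for the later use of the corollary.) Everything else is bookkeeping.
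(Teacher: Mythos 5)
Your argument is correct and is essentially the route the paper intends: the paper gives no explicit proof (it only remarks that the corollary ``follows immediately'' from Corollary~\ref{cor:cutting}), and your iterated cutting into the segments $Q_i$ — justified by the observation that equality at $v_1,\dots,v_{\ell-1}$ preserves the \ref{eq:cover_condition} status of all later vertices — combined with the two-alternating-covers analysis from Lemma~\ref{lem:in_cover} is precisely the natural elaboration of that remark. Your caveat about the degenerate case $S_i \cap \Partition{v_i} = \emptyset$, where the stated equivalence becomes vacuous in one direction, is a legitimate observation about the corollary's formulation rather than a gap in your proof.
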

	
	In the proof of Lemma~\ref{lem:in_cover}, we use that $v$ is the first vertex satisfying the (weak) \ref{eq:cover_condition}. Yet, it suffices that no predecessor of $v$ fulfills the strict \ref{eq:cover_condition} to obtain a \ac{MWVC} of $\pathGraph$ containing $v$. In that case, the cover may not be unique though.
	
	Our algorithm cuts the instance whenever there is a vertex $v$ satisfying the strict \ref{eq:cover_condition}. It then treats the resulting instances $\pathGraph_{\preceq v}$ and $\pathGraph_{v \prec}$ separately.
	
	To get a better understanding of the above statements, it may be helpful to consider the path depicted in Figure~\ref{fig:StackVC-Example}. When plugging in $\price^* = (13,\infty)$, no vertex $v_i$ with $i < 7$ satisfies the \ref{eq:cover_condition}. Hence, the respective other part of the bipartition constitutes a \ac{MWVC} of $\pathGraph_{\preceq v_i}$. Vertex $v_7$ fulfills the \ref{eq:cover_condition} and thus the set containing $v_7$ and every second of its predecessors constitutes a \ac{MWVC} of $\pathGraph_{\preceq v_7}$. If this cover is chosen, it suffices to treat the remaining path separately.
	
	\section{Bounds for Priceable Vertices}
	\label{sec:cutting+bounds}
	As we motivated before, the leader generally wants to include a specific selection of priceable vertices in the follower's solution. In the following, we therefore derive bounds for each priceable vertex $\p_i$ telling us when a priceable vertex will be in- or excluded from a \ac{MWVC}. First, we give the definition, then explain the derivation, and finally, show how the bounds can be computed efficiently.
	
	\begin{definition}[Lower and upper Bounds]
		\label{def:BoundIntuition}
		Let $\p_i$ be a priceable vertex of $\pathGraph$. Given prices $\price_1, \dots, \price_{i-1}$ for all preceding priceable vertices, we denote the lower and upper bound of $\p_i$ as
		\begin{enumerate}[(i)]
			\item{$\olb[i](\price_1, \dots, \price_{i-1})$} -- the maximum price for $\p_i$ such that $\p_i$ is included in any \ac{MWVC} of $\pathGraph$ independent of the prices $\price_{i+1}, \dots, \price_k$ chosen for the remaining priceable vertices $\p_{i+1}, \dots, \p_k$;
			\item{$\oub[i](\price_1, \dots, \price_{i-1})$} -- the maximum price $\price_i$ such that $p_i$ can still be included in a \ac{MWVC} of $\pathGraph$ by setting the prices of $\p_{i+1}, \dots, p_k$ accordingly.
		\end{enumerate}		
		
		Note that both bounds, $\olb[i](\price_1, \dots, \price_{i-1})$ and $\oub[i](\price_1, \dots, \price_{i-1})$, are functions from $\mathbb{R}^{i-1} \to \mathbb{R}$. However, if the input $\price_1, \dots, \price_{i-1}$ is fixed, we write $\olb[i]$ and $\oub[i]$ for the sake of convenience.
	\end{definition}
	
	The rationale for computing $\olb[i]$ builds on Lemma~\ref{lem:in_cover}. We search for the maximum price $\price_i = \olb[i]$ such that one node in $u \in \Partition{\p_i}$ between $\p_i$ and $\p_{i+1}$, \ie $\p_i \preceq u \prec \p_{i+1}$, satisfies the \ref{eq:cover_condition}.
	We further ensure that when setting $\price_i = \olb[i]$, no vertex $w$ from $\pathGraph \setminus \Partition{\p_i}$ between $p_i$ and $u$ satisfies the \textit{strict} \ref{eq:cover_condition}.
	Moreover, we cut the instance whenever the strict \ref{eq:cover_condition} holds at some predecessor of $p_i$. This guarantees the existence of a \ac{MWVC} including every $v \in \Partition{p_i}$ with $p_i \preceq v \preceq u$.
	
	To derive $\oub[i]$,  we use a similar argument as above for $\olb[i]$. However, this time, we want to find the maximum price such that no $w \in  \pathGraph \setminus \Partition{\p_i}$ between $p_i$ and $p_{i+1}$ fulfills the strict \ref{eq:cover_condition}.
	Note that the existence of such a node $w$ would force $\p_i$ out of any \ac{MWVC}.
	
	On the same occasion, we make the following observation:
	\begin{observation}
		\label{obs:exclude_by_oub}
		Setting $\price_i = \oub[i]$ suffices to obtain an optimal cover excluding $\p_i$. It is thereby the minimum price guaranteeing the existence of a \ac{MWVC} excluding $p_i$. We discuss later in Section~\ref{subsec:Interaction_OpposingPartitions} why this may be desirable.
		\\
		One important special case is the last priceable vertex $\p_k$. For this vertex, the lower and upper bound always coincide, \ie $\olb[k]= \oub[k]$, by definition.
	\end{observation}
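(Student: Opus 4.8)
The plan is to reduce all three assertions to a single monotonicity fact about minimum weight covers. Fix the prices $\price_1,\dots,\price_{i-1}$ of the earlier priceable vertices. For a fixed choice of the later prices $\price_{i+1},\dots,\price_k$, let $a$ denote the minimum weight of a vertex cover of $\pathGraph$ that avoids $\p_i$ and let $c(\price_i)$ denote the minimum weight of one that contains $\p_i$. Since $\p_i$ has no priceable neighbour, every cover of $\pathGraph$ either contains $\p_i$ -- and then $\price_i$ enters its weight only as an additive term -- or it contains both fixed-price neighbours of $\p_i$ in its place; hence $a$ is independent of $\price_i$ while $c(\price_i)=c(0)+\price_i$. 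Thus $a-c(\price_i)$ is affine in $\price_i$ with slope $-1$, and this holds uniformly over the choice of later prices. I would first record the translation: for a given $\price_i$ there is a \ac{MWVC} excluding $\p_i$ if and only if $a-c(\price_i)\le 0$, whereas $\p_i$ lies in \emph{every} \ac{MWVC} if and only if $a-c(\price_i)>0$.

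Next I would unwind \Cref{def:BoundIntuition}: $\oub[i]$ is the largest $\price_i$ for which \emph{some} choice of the later prices still leaves $\p_i$ in a \ac{MWVC}, that is, the largest $\price_i$ with $a\ge c(\price_i)$ for at least one such choice. Write $d(\price_i)$ for the supremum of $a-c(\price_i)$ over all choices of the later prices. The slope-$-1$ dependence gives $d(\price_i)=d(0)-\price_i$, and $d(0)$ is finite (and the supremum is attained -- see below), because $a-c(\price_i)$ is bounded above; for instance the set of all fixed-price vertices is a cover avoiding $\p_i$, so $a\le\omega(F)$. Since $d(\price_i)\ge 0$ exactly when $\price_i\le d(0)$, the defining property of $\oub[i]$ yields $\oub[i]=d(0)$, hence $d(\price_i)=\oub[i]-\price_i$. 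Putting $\price_i=\oub[i]$ gives $d(\oub[i])=0$, so $a\le c(\oub[i])$ for \emph{every} choice of the later prices -- that is, a \ac{MWVC} omitting $\p_i$ always exists, which is the first assertion. Putting $\price_i<\oub[i]$ gives $d(\price_i)>0$, so some choice of the later prices makes $a-c(\price_i)>0$, forcing $\p_i$ into \emph{every} \ac{MWVC}; hence no price strictly below $\oub[i]$ guarantees a \ac{MWVC} excluding $\p_i$, so $\oub[i]$ is the smallest price that does -- the second assertion. (Alternatively, the first assertion follows directly from \Cref{lem:in_cover}: at $\price_i=\oub[i]$ there is a vertex $w\notin\Partition{\p_i}$ with $\p_i\prec w$ at which the weak \ref{eq:cover_condition} holds while no predecessor of $w$ satisfies the strict \ref{eq:cover_condition}, and the cover $\Partition{w}\cap\pathGraph_{\preceq w}$ that \Cref{lem:in_cover} then yields omits $\p_i$ and extends to a \ac{MWVC} of $\pathGraph$.)

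For the last priceable vertex $\p_k$ there are no later prices to fix, so the ``for some'' quantifier in part~(ii) of \Cref{def:BoundIntuition} and the ``independent of the later prices'' clause in part~(i) are both vacuous; consequently $\olb[k]$ and $\oub[k]$ both collapse to the single threshold at which $a=c(\price_k)$, so $\olb[k]=\oub[k]$.

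The step I expect to need the most care is the bookkeeping behind $d(0)$: one must justify that the supremum defining it is finite \emph{and} attained, so that ``$a\ge c(\price_i)$ for some choice of the later prices'' is genuinely equivalent to ``$d(\price_i)\ge 0$''. This holds because the cover weights are continuous and piecewise linear in the prices and bounded above, so the supremum is in fact a maximum. For the \Cref{lem:in_cover} route to the first assertion, the analogous subtlety is verifying that no predecessor of $w$ satisfies the strict \ref{eq:cover_condition} -- exactly what the cutting step of \Cref{cor:cutting} and the design of the bounds are meant to guarantee.
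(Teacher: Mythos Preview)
Your argument is correct and takes a genuinely different route from the paper. The paper does not give a standalone proof of this observation; the closest thing appears later, inside the proof of \Cref{lem:beyond_bounds}(ii), where the authors case-split on whether the bounds coincide and on which side of the bipartition the relevant vertex lies, then exhibit a concrete vertex $w\notin\Partition{\p_i}$ satisfying the weak \ref{eq:cover_condition} and invoke \Cref{lem:in_cover} and \Cref{cor:cutting} to produce a \ac{MWVC} omitting $\p_i$. Your approach is instead purely structural: you exploit that, for any fixed choice of the later prices, the minimum weight of a cover containing $\p_i$ is affine in $\price_i$ with slope~$1$ while the minimum weight of a cover avoiding $\p_i$ is constant in $\price_i$, so the quantity $d(\price_i)=\sup(a-c(\price_i))$ has slope~$-1$ and the threshold $\oub[i]$ drops out immediately. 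This is cleaner for the observation at hand, makes the ``minimum such price'' claim an immediate consequence of the monotonicity, and in fact uses nothing about paths --- it would go through verbatim on any graph. The paper's route, by contrast, is tied to the path structure and the cutting mechanism, which is what the algorithm needs anyway.

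One technical point deserves tightening. Your justification that the supremum $d(0)$ is attained (``continuous and piecewise linear in the prices and bounded above'') is not quite enough as stated, since the domain $[0,\infty)^{k-i}$ of the later prices is not compact. The quickest repair is to note that $a-c(0)\le \omega(\pred(\p_i))+\omega(\psucc(\p_i))$ (replace $\p_i$ by its two fixed-price neighbours in any optimal cover containing it), and that a bounded piecewise-linear function with finitely many linear pieces attains its supremum even on an unbounded polyhedral domain. Alternatively, you can bypass the issue entirely: \Cref{def:BoundIntuition} already asserts that $\oub[i]$ is a \emph{maximum}, so at $\price_i=\oub[i]$ some later prices give $a\ge c(\oub[i])$, while for every $\epsilon>0$ all later prices give $a<c(\oub[i]+\epsilon)$; combined with $d(\price_i)=d(0)-\price_i$ this forces $d(\oub[i])=0$ without ever asking whether the supremum is a maximum.
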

	
	\Cref{alg:ComputeBounds} shows how the above principles can be implemented to compute the bounds efficiently. The main idea is to evaluate the \ref{eq:cover_condition} for every vertex $v$ and compute the difference or \textit{gap} between the weights of the vertices from the two parts of the bipartition up to $v$:
	\begin{definition}
		\label{def:gap}
		Let $\p_i \in P$ be a priceable vertex and let $\price_1, \dots, \price_{i-1}$ be the prices of the priceable predecessors of $\p_i$. \Wlog no vertex $u \prec \p_i$ satisfies the strict \ref{eq:cover_condition} (otherwise use $\pathGraph = \pathGraph_{u\prec}$ for the following definition). For a vertex $v$ with $\p_i \preceq v \prec \p_{i+1}$ (or $\p_i \preceq v$ if $i=k$), we define $\gap(v)$ as the difference in weight of the two parts of the bipartition of $\pathGraph_{\preceq v}$. More concretely,
		\begin{equation}
			\label{eq:gap_definition}
			\gap(v) \coloneqq \begin{cases}
			\omega(\pathGraph_{\preceq v} \cap \Partition{v}) - \omega(\pathGraph_{\preceq v} \setminus \Partition{v}) & \text{if } v \in \Partition{v},\\
			\omega(\pathGraph_{\preceq v} \setminus \Partition{v}) - \omega(\pathGraph_{\preceq v} \cap \Partition{v}) & \text{if } v \notin \Partition{v}
			\end{cases}
		\end{equation}
		where we use $\omega(p_j) = \price_j$ for $j<i$ and $\omega(p_i) = 0$.\footnote{This definition differs from the one in the conference version \cite{StackVCP_SAGT23}. The definition of $\gap(v)$ using the absolute value neglects that the above differences can be negative.}
	\end{definition}

	\begin{algorithm}[tbhp]
	\algorithmInit
	\Fn{\ComputeBounds{$\pathGraph_{\prec p_{i+1}}, \price_1, \dots, \price_{i-1}$}}{	
		\KwIn{$\pathGraph_{\prec p_{i+1}}= (v_1, \dots, v_\ell)$, prices $\price_1, \dots, \price_{i-1}$ for $\p_1, \dots, \p_{i-1}$.}
		\KwOut{Lower bound $\olb[i]$ and upper bound $\oub[i]$ for $\p_i$.}
		Cut instance at $v \prec p_i$ whenever the strict cover condition is fulfilled and repeat with $\pathGraph_{v \prec}$\; 
		$\omega(\p_i)$ = $0$\;
		Initialize $\beta[i,\mbox{pred}(p_i)]$ = $\infty$, $\alpha[i,\mbox{pred}(\mbox{pred}(p_i))]$ = $0$ \;
		\For(\tcp*[f]{If $p_i = p_k$, go through $\p_i \preceq v \preceq v_n$}){$v$ \With $\p_i \preceq v \prec{p_{i+1}}$}{
			$\gap(v)$ = $\begin{cases}
			\omega(\pathGraph_{\preceq v} \cap \Partition{v}) - \omega(\pathGraph_{\preceq v} \setminus \Partition{v}) & \text{if } v \in \Partition{v},\\
			\omega(\pathGraph_{\preceq v} \setminus \Partition{v}) - \omega(\pathGraph_{\preceq v} \cap \Partition{v} & \text{if } v \notin \Partition{v})
			\end{cases}$ \;
			\uIf{$v \in \Partition{\p_i}$}{
				$\alpha[i,v]$ = $\max\{\gap(v),\ \alpha[i, \mbox{pred}(\mbox{pred}(v))]\}$\;
				$\beta[i,v]$ = $\beta[i,\ \mbox{pred}(v)]$\;
			}
			\Else{
				$\alpha[i,v]$ = $\alpha[i,\   \mbox{pred}(v)]$\;
				$\beta[i,v]$ = $\min\{\gap(v), \alpha[i,\ \mbox{pred}(\mbox{pred}(v))]\}$
			}
			\If{$\alpha[i,v] > \beta[i,v]$}{
				\uIf(\tcp*[f]{include $p_i$ by cutting after $v$}){$v \in \Partition{\p_i}$}{
					$\olb[i]$ = $\oub[i] $ = $ \beta[i,v]$}
				\Else(\tcp*[f]{include $p_i$ by cutting after $u$ with $\alpha[i,v] = \gap(u)$}){
					$\olb[i]$ = $\oub[i] $ = $ \alpha[i,v]$}		
				\KwRet{$\olb[i]$, $\oub[i]$}
			}
		}
		\uIf{$i \neq k$}{
			$\olb[i]$ = $\alpha[i,v_{\ell}]$, $\oub[i]$ = $\beta[i,v_{\ell}]$\;
		}
		\Else(\tcp*[f]{Special treatment for $p_k$ due to end of path}){
			\uIf(\tcp*[f]{cut after $u$ with $\alpha[i,u] = \gap(u)$}){$v_{\ell} \in \Partition{\p_i}$}{
					$\olb[i]$ = $\oub[i]$ = $\alpha[i,v_{\ell}]$}
			\Else(\tcp*[f]{include $\p_i$ by excluding $v_{\ell}$}){
					$\olb[i]$ = $\oub[i]$ = $\beta[i,v_{\ell}]$}
		}
		\KwRet{$\olb[i]$, $\oub[i]$}
	}
	\setstretch{1}
	\caption{Algorithm to compute $\olb[i]$ and $\oub[i]$.}
	\label{alg:ComputeBounds}
\end{algorithm}
	
	Intuitively speaking, the gap tells us how high we can set $\price_i$ until $v$ triggers the (weak) \ref{eq:cover_condition}. Consequently, we want to find a node $u \in \Partition{\p_i}$ with a very large gap such that no $w \in \pathGraph \setminus \Partition{\p_i}$ between $\p_i$ and $u$ has a strictly smaller gap. Algorithm~\ref{alg:ComputeBounds} essentially does exactly this.	
	
	\begin{lemma}
		\label{lem:correctness_boundAlgo}
		\Cref{alg:ComputeBounds} computes the bounds $\olb[i]$ and $\oub[i]$ as introduced in Definition~\ref{def:BoundIntuition}.
	\end{lemma}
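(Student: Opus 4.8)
The plan is to prove both halves at once — that the algorithm returns $\olb[i]$ and that it returns $\oub[i]$ — by first giving a combinatorial description of each bound in terms of the $\gap$-values along the segment from $p_i$ to $p_{i+1}$, and then verifying by induction along the for-loop that the arrays $\alpha[i,\cdot]$ and $\beta[i,\cdot]$ track exactly those values. Throughout, the cut in the first line together with Corollaries~\ref{cor:cutting} and~\ref{cor:VC} lets us assume no vertex strictly before $p_i$ satisfies the strict \ref{eq:cover_condition}, and we keep $\omega(p_i)=0$, reinserting the decision variable $\price_i$ only at the end. Call the vertices $v$ with $p_i\preceq v\prec p_{i+1}$ (resp.\ $p_i\preceq v$ when $i=k$) \emph{relevant}; among them the vertices of $\Partition{p_i}$ are the $a$-vertices ($p_i$ is one) and the rest are the $b$-vertices. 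Since the path alternates between the two sides of the bipartition, $\pred$ swaps the two types and $\pred\circ\pred$ moves to the previous vertex of the same type, which is precisely why the loop can maintain, with $O(1)$ work per vertex, the running maximum of $\gap$ over the $a$-vertices seen so far in $\alpha[i,\cdot]$ and the running minimum over the $b$-vertices in $\beta[i,\cdot]$.

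The combinatorial description is read off from Lemma~\ref{lem:in_cover} and the remark following Corollary~\ref{cor:VC}. After substituting $\omega(p_i)=\price_i$, each $a$-vertex $u$ satisfies the \ref{eq:cover_condition} exactly when $\price_i$ does not exceed a threshold equal to $\gap(u)$, and each $b$-vertex $w$ satisfies the \emph{strict} \ref{eq:cover_condition} exactly when $\price_i$ strictly exceeds a threshold equal to $\gap(w)$ — the sign of these $\gap$-values matters (cf.\ the footnote to Definition~\ref{def:gap}), and $\price_i$ shifts only the side containing $p_i$. Scanning the relevant vertices left to right, $p_i$ lies in some \ac{MWVC} of $\pathGraph$ for \emph{every} continuation $\price_{i+1},\dots,\price_k$ iff the first vertex at which a cover condition is triggered is an $a$-vertex — one then cuts there (Corollary~\ref{cor:cutting}), using that the follower breaks ties for the leader — whereas $p_i$ lies in some \ac{MWVC} for \emph{some} continuation iff no $b$-vertex triggers the strict condition before any $a$-vertex triggers its condition, since then the leader can, by setting the later prices, realize a cut point past $p_{i+1}$ (and a strict $b$-condition inside the window would, by Lemma~\ref{lem:in_cover}, force $p_i$ out irrevocably). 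Translating these two scan conditions into inequalities on $\price_i$ yields
\[
\olb[i] \;=\; \max_{u\ a\text{-vertex}}\ \min\bigl\{\,\gap(u),\ \min\{\,\gap(w)\;:\; w\ b\text{-vertex},\ p_i\preceq w\prec u\,\}\,\bigr\}
\]
up to the weak/strict bookkeeping, and a companion expression for $\oub[i]$ that, by Observation~\ref{obs:exclude_by_oub}, is equivalently the least $\price_i$ for which some \ac{MWVC} excludes $p_i$; in particular the two bounds coincide exactly when the maximizing $a$-vertex above already lies in $[p_i,p_{i+1})$ with no obstructing $b$-vertex ahead of it.

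Granting these descriptions, correctness of the dynamic program is a routine induction on the relevant vertices in processing order, with the invariant that after $v$ is handled $\alpha[i,v]$ equals the maximum of $\gap$ over the $a$-vertices in $[p_i,v]$ and $\beta[i,v]$ the minimum of $\gap$ over the $b$-vertices in $[p_i,v]$ (with $\beta[i,\cdot]=\infty$ until the first $b$-vertex); the initializations $\alpha[i,\pred(\pred(p_i))]=0$ and $\beta[i,\pred(p_i)]=\infty$ supply the base case and the two loop branches preserve it. The test $\alpha[i,v]>\beta[i,v]$ fires precisely when a suitable $a$-vertex cut point has appeared that cannot be undercut by an earlier $b$-vertex; at that moment, cutting after it both isolates the rest of the path and pins $\olb[i]$ and $\oub[i]$ to the common returned value, in agreement with the characterization above. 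It then remains to match the two post-loop returns: for $i\neq k$, $\alpha[i,v_\ell]$ and $\beta[i,v_\ell]$ are the relevant $\max$ and $\min$ over the full window $[p_i,p_{i+1})$ (here the maximum is unobstructed, since no test fired), hence $\olb[i]$ and $\oub[i]$; for $i=k$ the two end-of-path sub-cases — cutting after the best $a$-vertex, or forcing $p_k$ in by paying to drop the last vertex $v_n$ — reproduce the formula with $\olb[k]=\oub[k]$ as in Observation~\ref{obs:exclude_by_oub}.

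The step I expect to be the genuine obstacle is the combinatorial description; everything after it is bookkeeping. One must show that the predicates ``$p_i$ is coverable for \emph{every} continuation'' and ``$p_i$ is coverable for \emph{some} continuation'' are genuinely \emph{local} to the window $[p_i,p_{i+1})$: that no cover condition occurring at a vertex $\succeq p_{i+1}$ can save $p_i$ in the first case, and that in the second case the leader can always choose $\price_{i+1},\dots,\price_k$ so as to realize a later cut point keeping $p_i$ in. Both directions amount to re-running the cutting argument of Lemma~\ref{lem:in_cover} and Corollary~\ref{cor:cutting} while carefully tracking the one-sided effect of $\price_i$ on the bipartition weights. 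The attendant weak-versus-strict distinctions — the lower bound wants the weak condition at the favorable $a$-vertex but the strict condition at the obstructing $b$-vertices, because ties go to the leader — and the parity bookkeeping that makes $\pred\circ\pred$ the right step are fiddly but mechanical.
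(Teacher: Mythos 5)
Your proposal is correct and follows essentially the same route as the paper: both reduce correctness to the cover condition of Lemma~\ref{lem:in_cover} together with Corollaries~\ref{cor:cutting} and~\ref{cor:VC}, interpret $\alpha[i,\cdot]$ and $\beta[i,\cdot]$ as the running maximum of $\gap$ over the $\Partition{p_i}$-vertices and the running minimum over the opposite-side vertices of the window $[p_i,p_{i+1})$, and then split into the non-intersecting case, the two intersection cases according to which side of the bipartition the crossing vertex lies on, and the end-of-path case for $p_k$. If anything, you are more explicit than the paper (which dismisses the non-intersecting case as ``clear'') about the one substantive point --- that in/exclusion of $p_i$ is decided locally within the window via the weak/strict distinction and tie-breaking --- and your reading of the $\beta$-update as a running minimum of the opposite-side gaps is the intended one.
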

	\begin{proof}
			The proof follows from Lemma~\ref{lem:in_cover}, Corollary~\ref{cor:cutting}, and Corollary~\ref{cor:VC} which apply because we cut the path whenever the strict \ref{eq:cover_condition} condition is fulfilled at some node $v$.
			
			Clearly, \Cref{alg:ComputeBounds} computes the correct optimal bounds for any $\p_i \neq \p_k$ if the lower- and upper bound do not intersect.
			In the following, we prove the correctness if the bounds intersect, \ie $\alpha[i,v] \geq \beta[i,v]$, or if the end of $\pathGraph$ is reached.	
			
			\medskip
			
			To show that there is a unique optimal price for $\p_i$ if the bounds $\alpha[i,v]$ and $\beta[i,v]$ intersect at some vertex $v$, we distinguish whether $v$ lies in $\Partition{\p_i}$.
			\begin{description}
				\item [$v \in \Partition{\p_i}$:] We know that $\gap(v)>\alpha[i,\pred(v)]$, as the bounds did not intersect before. For $\price_i = \beta[i,v]$, no $w \in \pathGraph \setminus \Partition{\p_i}$ between $\p_i$ and $v$ fulfills the strict \ref{eq:cover_condition}. Conversely, $v$ satisfies the strict \ref{eq:cover_condition} under $\price_i = \beta[i,v] < \alpha[i,v] = \gap(v)$. Hence, for any $\price_i > \beta[i,v]$, there is no \ac{MWVC} containing $\p_i$ whereas for $p_i \leq \beta[i,v]$ vertex $\p_i$ is contained in every \ac{MWVC} of $\pathGraph$. Hence, $\olb[i] = \oub[i] = \beta[i,v]$ is the unique optimal choice given that $p_i$ should be part of the follower's solution.
				\item [$v \notin \Partition{\p_i}$:] As above, we have $\gap(v) > \beta[i,\pred(v)]$, since the bounds did not intersect before. For price $\price_i > \alpha[i,v] > \beta[i,v]$ there is no $u \in \Partition{v}$ between $p_i$ and $v$ satisfying the \ref{eq:cover_condition}. Moreover, at $v$ the \ref{eq:cover_condition} holds with strict inequality.\footnote{It is possible that the \ref{eq:cover_condition} is already satisfied for a vertex $v' \in \Partition{v}$ but than the same argument holds with $v'$ instead $v$.} Hence, by Lemma~\ref{lem:in_cover} all \acp{MWVC} of $\pathGraph$ include $v$ and by Corollary~\ref{cor:VC} no such cover contains $\p_i$. For $\price_i = \alpha[i,v]$ vertex $\p_i$ is contained in a \ac{MWVC} by cutting the instance after a node $u$ with $\p_i \preceq u \prec v$ and $\alpha[i,v] = \gap(u)$.	
			\end{description}
			
			\medskip
			
			If the bounds for computing $p_k$ do not coincide, we have to distinguish whether the last vertex $v_n$ of $\pathGraph$ is in $\Partition{\p_k}$.
			If it is, we need to find a vertex $u \in \Partition{\p_k}$ which fulfills the \ref{eq:cover_condition} where each vertex $w \in \Partition{\p_k}$ between $\p_k$ and $u$ does not. So this is basically the same setting as for computing $\olb[i]$ as before.
			
			If $v_n \notin \Partition{\p_k}$, we could set the price $\price_k$ to the highest value such that no vertex between $\p_k$ and $v_n$ (including $v_n$) does not fulfill the strict \ref{eq:cover_condition}. Since $v_n$ itself does not fulfill the strict \ref{eq:cover_condition}, there is a \ac{MWVC} of $\pathGraph$ which does not include $v_n$ and thus includes $\p_k$. Note, it is necessary that $v_n$ is the last vertex of the path to apply this argument.
		\end{proof}
	
	For the instance in Figure~\ref{fig:StackVC-Example}, we obtain the bounds $\olb[1] = 5$ and $\oub[1]=13$. The bounds for the second priceable vertex $\p_2$ depend on the choice for $\price_1$.
	For example, for $\pi_1 = \olb[1]$, we obtain the bounds $\olb[2]=1$, and $\oub[2] = 3$, while for $\pi_1=\oub[1]$ we obtain  the bounds $\olb[2]=9$, and $\oub[2] = 11$.
	
	\section{Interactions between Priceable Vertices}
	\label{sec:effect}
	As we motivated earlier, the mutual influence of priceable vertices is the main challenge of solving \ac{StackVC} on a path. Nevertheless, in the previous section, we computed the bounds $\olb[i]$ and $\oub[i]$ of each priceable vertex under the assumption that the prices of all preceding priceable vertices are fixed. In this section, we shed light on the interaction between priceable vertices and justify why the computation of bounds can be done in the way we explained before.
	
	To do so, we first show that choosing prices outside the bounds cannot be beneficial (cf.\ Lemma~\ref{lem:beyond_bounds}). 
	Moreover, we show that the price of $\p_i$ within the bounds does not affect the bounds for vertices $\p_{i+2}, \dots, \p_k$ if $\price_{i+1}$ is set accordingly (cf.\ Lemma~\ref{lem:effectRange}).
	Then, we investigate how the price $\price_i$ of one priceable vertex influences the bounds -- and thus ultimately also the potential revenue -- of the next priceable vertex $p_{i+1}$. We distinguish the membership of $p_{i+1}$ in $\Partition{p_i}$ (Section~\ref{subsec:Interaction_SamePartition} and \ref{subsec:AppraisingRevenue}).
	Further, we observe that there is an optimal price setting, where all prices $\price_i$ are set to their lower or upper bound \wrt $\price_1, \dots, \price_{i-1}$ (cf.\ Section~\ref{subsec:options}).
	
	\begin{lemma}
		\label{lem:beyond_bounds}
		Let $\price_1, \dots, \price_{i-1}$ be fixed, and $\olb[i]$ and $\oub[i]$ be the optimal bounds computed by \Cref{alg:ComputeBounds} for $\p_i$. The following statements hold:
		\begin{enumerate}
			\item[(i)] Choosing $\price_i < \olb[i]$ reduces the leader's maximum possible revenue on $\pathGraph$ compared to $\price_i = \olb[i]$.
			\item[(ii)] Choosing $\price_i > \oub[i]$ cannot improve the leader's maximum possible revenue  on $\pathGraph$ compared to $\price_i = \oub[i]$.
		\end{enumerate}
	\end{lemma}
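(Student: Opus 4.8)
The plan is to keep $\price_1, \dots, \price_{i-1}$ fixed throughout and to exploit one structural feature of paths: whenever $\p_i$ is forced into (resp.\ out of) every \ac{MWVC} of $\pathGraph$, the follower's covering problem decouples at $\p_i$, so the leader's maximum possible revenue becomes an \emph{affine} (resp.\ \emph{constant}) function of $\price_i$ on the range in question. Throughout I would use Lemma~\ref{lem:correctness_boundAlgo} to treat $\olb[i]$ and $\oub[i]$ via their defining properties in Definition~\ref{def:BoundIntuition}; the internal cutting performed by \Cref{alg:ComputeBounds} then plays no further role here.

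For part (i) I would start with a monotonicity observation: decreasing $\omega(\p_i)$ leaves the weight of every cover avoiding $\p_i$ unchanged and only lowers the weight of covers containing $\p_i$, so if at $\price_i = \olb[i]$ every \ac{MWVC} of $\pathGraph$ contains $\p_i$ for every choice of $\price_{i+1}, \dots, \price_k$ (this is Definition~\ref{def:BoundIntuition}(i)), then the same holds for every $\price_i \le \olb[i]$. Fix such a $\price_i$ and arbitrary later prices. Since $\p_i$ covers both of its incident edges, the covers of $\pathGraph$ containing $\p_i$ are precisely the sets $C_L \disjcup \{\p_i\} \disjcup C_R$ with $C_L$ a cover of $\pathGraph_{\prec \p_i}$ and $C_R$ a cover of $\pathGraph_{\p_i \prec}$, of weight $\omega(C_L) + \price_i + \omega(C_R)$; as every \ac{MWVC} of $\pathGraph$ has this form, the \acp{MWVC} of $\pathGraph$ are exactly those unions in which $C_L$ and $C_R$ are \acp{MWVC} of the two subpaths. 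Because the leader's revenue is additive along the same split, breaking ties in the leader's favour on each side independently shows that the leader's revenue equals $M_L + \price_i + M_R(\price_{i+1}, \dots, \price_k)$, where $M_L$ is the best leader revenue over \acp{MWVC} of $\pathGraph_{\prec \p_i}$ (a constant, as $\price_1, \dots, \price_{i-1}$ are fixed) and $M_R$ the best leader revenue over \acp{MWVC} of $\pathGraph_{\p_i \prec}$ (not involving $\price_i$). Maximising over $\price_{i+1}, \dots, \price_k$ yields leader's maximum possible revenue $= M_L + \price_i + \max_{\price_{i+1}, \dots, \price_k} M_R$, which is strictly increasing in $\price_i$ on $[0, \olb[i]]$; hence any $\price_i < \olb[i]$ is strictly worse than $\price_i = \olb[i]$.

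For part (ii) I would argue dually. Since $\oub[i]$ is, by Definition~\ref{def:BoundIntuition}(ii), the maximum price at which $\p_i$ can still be placed in some \ac{MWVC} of $\pathGraph$ by a suitable choice of later prices, for every $\price_i > \oub[i]$ and every $\price_{i+1}, \dots, \price_k$ no \ac{MWVC} of $\pathGraph$ contains $\p_i$. Consider the auxiliary instance $\pathGraph^{\infty}$ obtained from $\pathGraph$ by replacing $\omega(\p_i)$ with $\infty$; its minimum cover weight $m$ and its set of \acp{MWVC} (all of which avoid $\p_i$) depend on $\price_{i+1}, \dots, \price_k$ but not on $\price_i$. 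For $\price_i > \oub[i]$, every \ac{MWVC} of $\pathGraph$ avoids $\p_i$ and so has the same weight under $\price_i$ and under $\infty$; hence the global minimum weight is $m$ and the \acp{MWVC} of $\pathGraph$ coincide with those of $\pathGraph^{\infty}$, which makes the leader's revenue equal to $g(\price_{i+1}, \dots, \price_k)$, the best leader revenue over \acp{MWVC} of $\pathGraph^{\infty}$. At $\price_i = \oub[i]$, Observation~\ref{obs:exclude_by_oub} provides a \ac{MWVC} of $\pathGraph$ avoiding $\p_i$, so again the global minimum weight is $m$ and every \ac{MWVC} of $\pathGraph^{\infty}$ is also a \ac{MWVC} of $\pathGraph$; thus the follower's optimal covers at $\price_i = \oub[i]$ form a superset of those at $\price_i > \oub[i]$, and tie-breaking in the leader's favour gives revenue at least $g(\price_{i+1}, \dots, \price_k)$. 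Taking maxima over $\price_{i+1}, \dots, \price_k$ on both sides shows that the leader's maximum possible revenue at $\price_i = \oub[i]$ is at least $\max_{\price_{i+1}, \dots, \price_k} g$, which equals that at any $\price_i > \oub[i]$.

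The routine ingredients are the two monotonicity observations and the $C_L \disjcup \{\p_i\} \disjcup C_R$ decomposition of path covers around a fixed vertex. The step I expect to need the most care is the interaction with the ``independent of $\price_{i+1}, \dots, \price_k$'' quantifier in Definition~\ref{def:BoundIntuition}: in part (i) one must verify that the defining property of $\olb[i]$, stated at the boundary value, propagates to the entire segment $\price_i \le \olb[i]$ (this is exactly where monotonicity is used), while in part (ii) the delicate point is that, although the follower's set of optimal covers may strictly enlarge as $\price_i$ decreases from just above $\oub[i]$ to $\oub[i]$, tie-breaking in the leader's favour can never lower the revenue. Once these are settled, both inequalities are immediate from the affine/constant shape of the revenue function on the respective range.
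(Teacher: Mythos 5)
Your overall strategy coincides with the paper's: both proofs decouple the instance at $\p_i$ once its membership in every (resp.\ no) \ac{MWVC} is forced, so that the revenue is affine in $\price_i$ below $\olb[i]$ and constant above $\oub[i]$. For part~(i) your write-up is in fact more careful than the paper's: the explicit monotonicity step propagating the defining property of $\olb[i]$ to all $\price_i\le\olb[i]$, and the $C_L\disjcup\{\p_i\}\disjcup C_R$ decomposition with independent tie-breaking on each side, make rigorous what the paper compresses into ``cut after $\p_i$ and note that $\price_i$ does not affect $\pathGraph_{\p_i\prec}$.'' (Minor imprecision: at $\price_i=\olb[i]$ exactly, not every \ac{MWVC} need contain $\p_i$; only the existence of one such cover is guaranteed, but since ties are broken in the leader's favour this does not harm your inequality.) The substantive divergence is in part~(ii). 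You correctly reduce the claim, via the $\pathGraph^{\infty}$ device, to the single assertion that at $\price_i=\oub[i]$ there exists a \ac{MWVC} excluding $\p_i$ for every choice of the later prices --- and then you discharge that assertion by citing Observation~\ref{obs:exclude_by_oub}. But that Observation is stated in the paper without proof, and the paper's own proof of this lemma is where it actually gets verified: the authors spend the bulk of part~(ii) on a case analysis over how \Cref{alg:ComputeBounds} produced $\oub[i]$ (whether the bounds coincide at a vertex $v\in\Partition{\p_i}$ or $v\notin\Partition{\p_i}$, versus the existence of a first vertex $w\notin\Partition{\p_i}$ with $\gap(w)=\oub[i]$), in each case exhibiting a vertex at which the weak \ref{eq:cover_condition} holds so that Lemma~\ref{lem:in_cover} and Corollary~\ref{cor:cutting} yield the $\p_i$-avoiding cover. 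So while your argument is formally consistent with the paper's stated order of results, it outsources precisely the step that carries the technical content of part~(ii); to make the proof self-contained you would need to supply that existence argument (or a direct maximality argument: if for some later prices every \ac{MWVC} at $\price_i=\oub[i]$ contained $\p_i$, a slightly larger price would still admit a \ac{MWVC} containing $\p_i$, contradicting the maximality of $\oub[i]$ in Definition~\ref{def:BoundIntuition}(ii)).
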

	
	\begin{proof}
			The statement follows from the definition of the bounds (Definition~\ref{def:BoundIntuition}):
			\begin{enumerate}
				\item [(i)] If $\price_i < \olb[i]$, there exists a \ac{MWVC} including $\p_i$ by definition. Hence, we can cut the instance after $\p_i$ and solve the subinstances $\pathGraph_{\preceq \p_i }$ and $\pathGraph_{\p_i \prec}$ separately. Since $\price_i$ does not affect $\pathGraph_{\p_i \prec}$, we can choose the same optimal subcover as for $\price_i < \olb[i]$.
				
				Since the leader obtains $\price_i$ as revenue for $\p_i$ if $\price_i < \olb[i]$ and if $\price_i = \olb[i]$, setting $\price_i = \olb[i]$ is strictly better.
				\item [(ii)]
				If $\price_i > \oub[i]$, vertex $\p_i$ is not included in any \ac{MWVC} by definition. Thus, every \ac{MWVC} has to include $\pred(\p_i)$ and $\psucc(\p_i)$.
				We can therefore cut the instance before $\pred(\p_i)$ and after $\psucc(\p_i)$ and find a \ac{MWVC} on $\pathGraph_{\prec \pred(\p_i)}$ and $\pathGraph_{\psucc(\p_i) \prec}$ separately.
				Note that neither of these subcovers depends on the price $\price_i$ of $\p_i$.
				
				It remains to show that for $\price_i = \oub[i]$, there exists a \ac{MWVC} that does not include $\p_i$. To see this, note that either the bounds coincide (\ie $\olb[i] = \oub[i]$), or there is a vertex $w \in \pathGraph_{\prec \p_{i-1} \setminus \Partition{\p_i}}$ with $\gap(w)= \oub[i]$. 
				
				First, assume that the bounds match at some vertex $v$. If $v \in \Partition{\p_i}$, some $w$ between $\p_i$ and $v$ fulfills the \ref{eq:cover_condition}. That is, $w \notin \Partition{\p_i}$ is the first vertex with $\gap(w) = \beta[i,v]$. Conversely, no $u$ between $\p_i$ and $w$ satisfies the \ref{eq:cover_condition}. Hence, by \ref{eq:cover_condition}, there is a \ac{MWVC} of $\pathGraph$ containing $\Partition{w} = \pathGraph_{\preceq w} \setminus \Partition{\p_i}$ and thus excluding $\p_i$.
				Otherwise, if $v \notin \Partition{\p_i}$, it holds $\price_i = \oub[i] = \olb[i] = \alpha[i,\pred(v)]$. Therefore, no $u$ between $p_i$ and $v$ satisfies the strict \ref{eq:cover_condition} and the (weak) \ref{eq:cover_condition} holds at $v$. Thus, by Lemma~\ref{lem:in_cover} and Corollary~\ref{cor:cutting}, there is a \ac{MWVC} including $v$ and excluding $\p_i$.
				
				In case there is a vertex $w \in \pathGraph_{\prec \p_i} \setminus \Partition{\p_i}$ -- \WLOG we assume $w$ to be the first such vertex -- no vertex between $\p_i$ and $w$ satisfies the strict \ref{eq:cover_condition} whereas the (weak) \ref{eq:cover_condition} is fulfilled at vertex $w$. Following the same lines as above, there is a \ac{MWVC} excluding $\p_i$. 
			\end{enumerate}
		\vspace{-5ex}
		\end{proof}
	\noindent
	Next, we show that choosing a price \textit{strictly} between the bounds is not beneficial as well. Therefore, we show how $\price_i$ affects the bounds for the following priceable vertices. We use this later to narrow down the possible combinations of an optimal pricing scheme.
	
	\begin{lemma}
		\label{lem:effectRange}
		Let $\p_i$, $i<k$, be a priceable vertex of $\pathGraph$ and the prices $\price_1, \dots, \price_{i-1}$ of all preceding priceable vertices be fixed. We compare two prices $\price_i'$ and $\price_i''$ for $\p_i$  in the interval $[\olb[i], \oub[i]\,]$. Furthermore, let $x \in \mathbb{R}$ denote the difference between $\pi_i''$ and $\pi_i'$, \ie $\price_i'' = \price_i' + x$. The following statements hold:
		\begin{enumerate}[label=(\roman*),leftmargin=*]
			\item \label{enum:lemmaEffectRange_OnePart} If $\p_{i+1} \in \Partition{p_i}$, we get $\olb[i](\price_1, \mydots, \price_{i-1}, \price_i') = \olb[i](\price_1, \mydots, \price_{i-1}, \price_i'') - x$ and $\oub[i](\price_1, \mydots, \price_{i-1}, \price_i') = \oub[i](\price_1, \mydots, \price_{i-1}, \price_i'') -x$.
			\item \label{enum:lemmaEffectRange_OppPart} If $\p_{i+1} \notin \Partition{p_i}$, we get $\olb[i](\price_1, \mydots, \price_{i-1}, \price_i') = \olb[i](\price_1, \mydots, \price_{i-1}, \price_i'') + x$ and $\oub[i](\price_1, \mydots, \price_{i-1}, \price_i') = \oub[i](\price_1, \mydots, \price_{i-1}, \price_i'') + x$.
			\item \label{enum:lemmaEffectRange_implications}
			Independent of whether $\p_{i+1} \in \Partition{\p_i}$, the bounds for $\p_{i+2}, \dots, \p_{k}$ remain unchanged when $\price_{i+1}$ maintains the same distance to its bounds under both prices for $\price_i'$ and $\price_i''$ for $\p_i$.
			More concretely, if the price for $\p_{i+1}$ is set to $\price_{i+1}' = \olb[i](\price_1, \mydots, \price_{i-1}, \price_i') + y$ or $\price_{i+1}'' = \olb[i](\price_1, \mydots, \price_{i-1}, \price_i'') + y$ respectively, then $\olb[j](\price') = \olb[j](\price'')$ for all $j \geq i+2$ where $\price'_{\ell} = \price''_{\ell}$ for all $\ell \in \{1,\dots,i-1,i+1,\dots,j-1\}$.
		\end{enumerate}
	\end{lemma}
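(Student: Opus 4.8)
The plan is to reduce all three parts to one bookkeeping observation about how the quantity $\gap(v)$ of a fixed vertex $v$ depends on a priceable price, and then to trace this through the bounds computation of \Cref{alg:ComputeBounds} (equivalently, through the characterisation established in the proof of \Cref{lem:correctness_boundAlgo}). In the run of \Cref{alg:ComputeBounds} that produces the bounds of a priceable vertex $\p_n$, every inspected vertex $v$ comes at or after $\p_n$, hence strictly after every $\p_m$ with $m<n$, so $\omega(\p_m)=\price_m$ is one of the summands making up $\omega(\pathGraph_{\preceq v})$ and therefore occurs linearly in $\gap(v)$. By \Cref{def:gap} this $\price_m$-term has coefficient $+1$ or $-1$, the same for all inspected $v$, with sign determined solely by whether $\p_m\in\Partition{\p_n}$ (equivalently, by the parity of the path-distance from $\p_m$ to $\p_n$). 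Since $\olb[n]$ and $\oub[n]$ are read off from the $\gap$-values by the running $\alpha$- and $\beta$-recursions together with the $0$- and $\infty$-initialisations, this common $\price_m$-term pulls straight out of the running maxima and minima, so \emph{both} $\olb[n]$ and $\oub[n]$ are affine in $\price_m$ with slope $+1$ or $-1$ according to whether $\p_m\in\Partition{\p_n}$ --- provided the combinatorial skeleton of the run (which cuts occur, which vertex is the first to satisfy the \ref{eq:cover_condition}, whether a $0$/$\infty$ floor is binding) does not move as $\price_m$ varies.

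Parts~\ref{enum:lemmaEffectRange_OnePart} and~\ref{enum:lemmaEffectRange_OppPart} are then the case $m=i$, $n=i+1$, with the signs read off from \Cref{def:gap}. The substance is the proviso at the end of the previous paragraph: one must show that, as $\price_i$ moves within $[\olb[i],\oub[i]]$, the skeleton of the run producing $\olb[i+1],\oub[i+1]$ does not change. The idea is that if $\price_i$ were below $\olb[i]$ some vertex between $\p_i$ and $\p_{i+1}$ would satisfy the \emph{strict} \ref{eq:cover_condition} and the run would cut there, detaching $\p_i$ from the computation altogether; if $\price_i$ were above $\oub[i]$ a vertex on the opposite side of the bipartition would satisfy the strict \ref{eq:cover_condition}, forcing $\p_i$ out; only inside $[\olb[i],\oub[i]]$ is there no such cut, and a parallel check must rule out any floor toggling. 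Making this stability precise --- using \Cref{lem:beyond_bounds} and \Cref{def:BoundIntuition}, which are exactly what identify $[\olb[i],\oub[i]]$ as this safe regime --- is what upgrades the \emph{a priori} merely piecewise-affine dependence of $\olb[i+1],\oub[i+1]$ on $\price_i$ to the claimed affine, slope-$\pm1$ dependence, and I expect it to be the main obstacle.

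For part~\ref{enum:lemmaEffectRange_implications} I would chain the first two parts back in. Since $\price_{i+1}'=\olb[i+1](\price_1,\mydots,\price_{i-1},\price_i')+y$ and $\price_{i+1}''=\olb[i+1](\price_1,\mydots,\price_{i-1},\price_i'')+y$, parts~\ref{enum:lemmaEffectRange_OnePart}/\ref{enum:lemmaEffectRange_OppPart} give $\price_{i+1}''-\price_{i+1}'=\sigma x$, where $\sigma\in\{+1,-1\}$ is the slope of $\olb[i+1]$ in $\price_i$. Now fix $j\ge i+2$ and consider the run of \Cref{alg:ComputeBounds} producing the bounds of $\p_j$: every inspected vertex lies after both $\p_i$ and $\p_{i+1}$, so in each $\gap(v)$ there $\price_i$ carries a coefficient $c_i\in\{+1,-1\}$ and $\price_{i+1}$ a coefficient $c_{i+1}\in\{+1,-1\}$, each uniform over $v$. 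All of $\sigma,c_i,c_{i+1}$ have the form $\pm(-1)^d$ with $d$ a path-distance between two priceable vertices, so the additivity $d(\p_i,\p_j)=d(\p_i,\p_{i+1})+d(\p_{i+1},\p_j)$ yields $c_i=-\sigma\,c_{i+1}$. Hence each $\gap(v)$ in the $\p_j$-run changes by $c_i x+c_{i+1}(\sigma x)=(c_i+\sigma c_{i+1})x=0$, and --- by the interval argument of the previous paragraph applied in this run as well --- no cut toggles; so the whole run produces the same output, \ie $\olb[j](\price')=\olb[j](\price'')$ for all $j\ge i+2$ (and likewise $\oub[j]$). Since $\price'$ and $\price''$ agree on every other priceable price, nothing further enters, which completes the plan.
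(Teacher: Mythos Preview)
Your proposal is correct and follows essentially the same route as the paper: track the $\pm 1$ linear dependence of $\gap(v)$ on $\price_i$, argue that no strict \ref{eq:cover_condition} is triggered between $\p_i$ and $\p_{i+1}$ while $\price_i\in[\olb[i],\oub[i]]$ so that the computation's combinatorial skeleton does not change, and for part~\ref{enum:lemmaEffectRange_implications} show that the combined shift in $\price_i$ and $\price_{i+1}$ cancels in every later $\gap(v)$. The paper handles the stability of the skeleton in a single sentence (``since we choose $\price_i',\price_i''$ within these bounds, the strict cover condition is not fulfilled for any $v$ between $\p_i$ and $\p_{i+1}$, thus we do not cut''), whereas you flag it as the main obstacle and also raise the question of the $0$/$\infty$ initialisations becoming binding; your extra caution is not misplaced, but the paper's argument that all relevant $\gap$-values shift uniformly while no cut occurs is in fact sufficient. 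For part~\ref{enum:lemmaEffectRange_implications} the paper does a two-case computation ($\price_i+\price_{i+1}$ constant when $\p_{i+1}\in\Partition{\p_i}$, $\price_i-\price_{i+1}$ constant otherwise) rather than your parity identity $c_i=-\sigma c_{i+1}$, but the content is identical.
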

	
	\noindent
	To see that the statement holds, consider the definition of the bounds (Definition~\ref{def:BoundIntuition}) and their connection to the gap (Definition~\ref{def:gap}):
	
	\begin{proof}
			If $\price_i' = \price_i''$, the statements follow immediately. Thus, we consider the case where $\price_i' \neq \price_i''$. First, note that in this case $\olb[i] < \oub[i]$. Since we choose $\price_i'$ and $\price_i''$ within these bounds, the strict \ref{eq:cover_condition} is not fulfilled for any $v$ between $\p_i$ and $\p_{i+1}$, thus we do not cut the instance in between. \Wlog let $\price_i' < \price_i''$.
			\begin{enumerate}[(i)]
				\item 
				Looking at Definition~\ref{def:gap}, we observe that increasing $\price_i'$ by $x$ decreases $\gap(v)$ for all $v$ with $p_{i+1} \preceq v$ by the same amount and thus also $\olb[i+1]$ and $\oub[i+1]$.
				\item Here, basically the same argument holds as for \ref{enum:lemmaEffectRange_OnePart}. But since $\p_i \notin \Partition{p_i}$, we observe that increasing $\price_i$ increases $\gap(v)$ (instead of decreasing) for all $v$ with $p_{i+1} \preceq v$.
				\item If $\p_{i+1} \in \Partition{p_i}$, as we discussed in \ref{enum:lemmaEffectRange_OnePart}, the lower bound of $\p_{i+1}$ decreases by the same amount that $\price_i$ increases. When computing $\gap(v)$ for $ v \succ \p_{i+1}$ both $\p_i$ and $\p_{i+1}$ appear in the same part of the difference and since
				\begin{align*}
					\price_i'' + \price_{i+1}''
					= (\price_i'+x) + (\olb[i](\price_1, \dots, \price_{i-1}, \price_i')-x + y) = \price_i' + \price_{i+1}',
				\end{align*}
				we obtain the same values for $\gap(v)$, $p_{i+2} \preceq v$. Consequently, the bounds $\olb[j]$ and $\oub[j]$ with $j\geq i+2$ under both prices $\price'$ and $\price''$ also coincide.
				
				If $\p_{i+1} \notin \Partition{p_i}$, we obtain again the same values for $\gap(v)$. This is because the prices appear on different sides of the difference in Definition~\ref{def:gap} and because
				\begin{align*}
					\price_i'' - \price_{i+1}'' = (\price_i'+x) - (\olb[i](\price_1, \dots, \price_{i-1}, \price_i') + x + y) = \price_i' - \price_{i+1}'.
				\end{align*}
				Hence, the bounds for $\p_{i+2}, \dots, \p_{k}$ under $\price'$ and $\price''$ are also identical when $\p_{i+1}$ and $\p_i$ are on different sides of the bipartition.
			\end{enumerate}
		\vspace{-5ex}
		\end{proof}	
	\noindent
	To find the optimal prices from the leader's perspective, we must consider the effect of pricing a vertex within the described bounds. Therefore, we 
	now analyze the implications of choosing different prices within the bounds. By doing so, we are able to boil down the choices of each $\p_i$ to three options.
	In \Cref{subsec:AppraisingRevenue}, we then describe the revenue corresponding to these options. Finally, in \Cref{subsec:comparing_options} we show how to compare the options by their revenues.
	
	\subsection{Options}
	\label{subsec:options}
	In the following, we narrow the reasonable pricing schemes, \ie a combination of prices, for the leader. For this purpose, we distinguish whether $\p_{i+1}$ is in the same or in the opposing part of the bipartition as $\p_i$.
	
	\paragraph{Same part.}
	\label{subsec:Interaction_SamePartition}
	We begin with $\p_{i+1} \in \Partition{p_i}$. Observe that $\p_i$ can be included directly by choosing $\price_i = \olb[i]$. According to Lemma~\ref{lem:beyond_bounds}, any price below this is disadvantageous. Any $\price_i$ in the interval $(\olb[i], \oub[i]\,]$ has the same effect, namely that $\p_i$ is included if and only if $p_{i+1}$ is included as well. Looking at Lemma~\ref{lem:effectRange}, we observe that increasing $\price_i$ decreases $\olb[i+1]$ and $\oub[i+1]$ by the same amount that we increase $\price_i$. Since we must include $\p_{i+1}$ to include $\p_i$ for any $\price_i > \olb[i]$, we can \textit{shift} any excess revenue beyond $\olb[i]$ from $\p_i$ to $\p_{i+1}$ by setting $\price_i = \olb[i]$.
	
	\begin{lemma}[Options for priceable vertices in same part]
		\label{lem:options_OnePart}
		Let $i <k$ and $\p_{i+1} \in \Partition{\p_i}$, then the following statements hold
		\begin{enumerate}
			\item [\namedlabel{enum:CorollaryOptPrices1}{(i)}] Given that $\p_i$ should be contained in the follower's solution, $\price_i = \olb[i]$ is an optimal price for $\p_i$.
			\item [\namedlabel{enum:CorollaryOptPrices2}{(ii)}] Given that $\p_i$ should not be contained in the follower's solution, $\price_i = \oub[i]$ is an optimal price for $\p_i$.
		\end{enumerate}
	\end{lemma}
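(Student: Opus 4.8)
The plan is to prove both parts by a local exchange argument on the pair $\p_i,\p_{i+1}$ (which exists because $i<k$): start from an optimal pricing scheme $\price^*$ realising the prescribed membership of $\p_i$ and rewrite it so that $\price_i$ attains the claimed value without lowering the leader's revenue. Two facts do all the work. First, by Lemma~\ref{lem:beyond_bounds} we may assume an optimal $\price_i$ lies in $[\olb[i],\oub[i]]$; moreover, as observed just before the statement, on the half-open interval $(\olb[i],\oub[i]]$ the vertex $\p_i$ is in the follower's cover if and only if $\p_{i+1}$ is, whereas at $\price_i=\olb[i]$ the vertex $\p_i$ belongs to \emph{every} \ac{MWVC}. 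Second, since $\p_{i+1}\in\Partition{\p_i}$, Lemma~\ref{lem:effectRange}\ref{enum:lemmaEffectRange_OnePart} tells us that moving $\price_i$ by $x$ moves $\olb[i+1]$ and $\oub[i+1]$ by $x$ in the opposite direction, and Lemma~\ref{lem:effectRange}\ref{enum:lemmaEffectRange_implications} tells us that if $\price_{i+1}$ is moved simultaneously so as to retain the same distance to those (shifted) bounds, then the bounds --- and hence the entire sub-game --- of $\p_{i+2},\dots,\p_k$ are left unchanged, so each of those vertices keeps its membership.

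For (i): let $\price^*$ be optimal with $\p_i$ in the cover, so $\price_i^*\in[\olb[i],\oub[i]]$. If $\price_i^*=\olb[i]$ there is nothing to show; otherwise set $x\coloneqq\price_i^*-\olb[i]>0$, so that $\price_i^*\in(\olb[i],\oub[i]]$ and the equivalence forces $\p_{i+1}$ into the cover. Now lower $\price_i$ to $\olb[i]$ and raise $\price_{i+1}$ to $\price_{i+1}^*+x$. By Lemma~\ref{lem:effectRange}\ref{enum:lemmaEffectRange_OnePart} the new bounds of $\p_{i+1}$ are $\olb[i+1]+x$ and $\oub[i+1]+x$, so $\price_{i+1}$ retains its offset inside them, and by Lemma~\ref{lem:effectRange}\ref{enum:lemmaEffectRange_implications} the bounds and sub-game of $\p_{i+2},\dots,\p_k$ are unchanged. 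Consequently $\p_i$ --- now priced at its lower bound --- is covered, $\p_{i+1}$ keeps its membership (its position relative to its bounds and to the unchanged sub-game is preserved), every later priceable vertex keeps its membership, and the joint revenue from $\p_i$ and $\p_{i+1}$ equals $\olb[i]+(\price_{i+1}^*+x)=\price_i^*+\price_{i+1}^*$, exactly as before. Hence the rewritten scheme is still optimal and prices $\p_i$ at $\olb[i]$. The boundary case $i+1=k$ is the same, using $\olb[k]=\oub[k]$ to pin down $\price_k^*$.

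For (ii): let $\price^*$ be optimal with $\p_i$ \emph{not} in the cover. Since $\p_i$ is covered whenever $\price_i=\olb[i]$, we have $\price_i^*>\olb[i]$, so by the equivalence $\p_{i+1}$ is uncovered too and the leader gets nothing from either vertex. If $\price_i^*>\oub[i]$, then Lemma~\ref{lem:beyond_bounds}(ii) and Observation~\ref{obs:exclude_by_oub} together show that $\price_i=\oub[i]$ does at least as well while still admitting a \ac{MWVC} that excludes $\p_i$. Otherwise $\price_i^*\in(\olb[i],\oub[i])$: move $\price_i$ up to $\oub[i]$ and move $\price_{i+1}$ by the matching amount so that it keeps its offset in the shifted bounds; by Lemma~\ref{lem:effectRange}\ref{enum:lemmaEffectRange_implications} the sub-game on $\p_{i+2},\dots,\p_k$ stays intact, $\p_{i+1}$ remains uncovered just as in (i), and the revenue is unchanged. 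In all cases $\price_i=\oub[i]$ is optimal given that $\p_i$ is to be excluded.

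I expect the main obstacle to be the bookkeeping in the exchange step rather than any single computation: after shifting $\price_i$ and $\price_{i+1}$ one must verify carefully that $\p_{i+1}$ --- and through it every later priceable vertex --- keeps exactly its former membership, and that ties are still broken in the leader's favour. This is precisely what Lemma~\ref{lem:effectRange}\ref{enum:lemmaEffectRange_implications} is designed to yield, in combination with the equivalence between coverage of $\p_i$ and coverage of $\p_{i+1}$ on the interval $(\olb[i],\oub[i]]$ and the unconditional coverage of $\p_i$ at $\price_i=\olb[i]$; the remaining checks are routine.
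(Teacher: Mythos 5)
Your proposal is correct and follows essentially the same route as the paper: restrict $\price_i$ to $[\olb[i],\oub[i]]$ via Lemma~\ref{lem:beyond_bounds}, use the fact that on $(\olb[i],\oub[i]]$ inclusion of $\p_i$ is tied to inclusion of $\p_{i+1}$, and invoke Lemma~\ref{lem:effectRange} to shift the excess $x$ between $\price_i$ and $\price_{i+1}$ while leaving the bounds (and hence the sub-game) of $\p_{i+2},\dots,\p_k$ untouched. The paper states this in two sentences; your explicit exchange argument is just a more detailed write-up of the same idea.
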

	\begin{proof}
		To see assertion~(i), remember that we cannot include $\p_i$ for a higher price without including $\p_{i+1}$. If we include $\p_{i+1}$, the revenue shifts from $\p_i$ to $\p_{i+1}$ (cf.\ Lemma~\ref{lem:effectRange}\ref{enum:lemmaEffectRange_OnePart}), there is no difference between setting $\price_i$ to $\olb[i]$ or to any value between $\olb[i]$ and $\oub[i]$.
		
		To see assertion~(ii), recall that by Definition~\ref{def:BoundIntuition}, $\oub[i]$ suffices to exclude $\p_i$. Furthermore, by Lemma~\ref{lem:beyond_bounds} any price beyond it induces no change for the following vertices.
	\end{proof}
	
	\noindent
	In summary, we showed that whenever two consecutive priceable vertices $p_i$ and $p_{i+1}$ lie on the same side of the bipartition, $\olb[i]$ is an optimal choice for $\price_i$ unless we want to exclude $\p_i$. In that case, $\price_i = \oub[i]$ is optimal. We now explore why excluding $\p_i$ might be desirable.
	
	\paragraph{Opposing part.}
	\label{subsec:Interaction_OpposingPartitions}
	We change the setting and now assume that $\p_{i+1} \notin \Partition{\p_i}$. Furthermore, we consider prices $\price_1, \dots, \price_{i-1}$ as fixed. Looking at the gap of $\price_i$ (cf. Definition~\ref{def:gap}) it is very apparent that increasing $\price_i$ also increases the gap at any successor of $\p_{i+1}$. This directly translates to larger bounds and thus ultimately a higher price for $\p_{i+1}$. This is rather intuitive since the vertices lie on different sides of the bipartition. Again, this effect is limited to $\price_i \in [\olb[i],\,\oub[i]\,]$.
	
	Now, recall that for any $\olb[i] < \price_i \leq \oub[i]$, we must rely on $\p_{i+1}$ to include $\p_i$. This follows from the definition of $\olb[i]$ (cf.\ Definition~\ref{def:BoundIntuition}). In particular, $\p_{i+1}$ must be excluded such that there can be a \ac{MWVC} containing $\pathGraph_{\prec \p_{i+1}} \cap \Partition{p_i}$. Due to Lemma~\ref{lem:effectRange}, choosing any price $\price_i$ in the interval $(\olb[i], \oub[i]\,]$ is never better than setting $\price_i = \oub[i]$. To see this, recall that if $\p_{i+1}$ is not part of the follower's solution, $p_i$ is for any $\price_i \leq \olb[i]$ per definition.
	
	Summarizing above observations, we obtain three reasonable choices for $\p_i$ if $\p_{i+1} \notin \Partition{p_i}$:
	\begin{lemma}[Options for priceable vertices in opposing parts]
		\label{lem:Option_twoPartition}
		\hspace{0pt}
		\begin{description}
			\item [\namedlabel{desc:option1}{$\Option_1$}{$[$}$i${$]$}:] Renounce $p_i$ to increase the revenue on $p_{i+1}$ (by setting $\price_i = \oub[i]$).
			\item [\namedlabel{desc:option2}{$\Option_2$}{$[$}$i${$]$}:] Include $p_i$ in a \ac{MWVC} of $\pathGraph_{\prec v}, \, \p_i \preceq v \prec \p_{i+1}$ (by setting $\price_i = \olb[i]$).
			\item [\namedlabel{desc:option3}{$\Option_3$}{$[$}$i${$]$}:] Include $p_i$ by excluding $p_{i+1}$ (by setting $\price_i = \oub[i]$ and $\price_{i+1} = \oub[i+1]$).
		\end{description} 	
	\end{lemma}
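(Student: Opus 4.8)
The plan is to show that, with $\price_1,\dots,\price_{i-1}$ fixed, the leader's decision at $\p_i$ collapses to the three listed alternatives; to this end I would first pin $\price_i$ down to one of its two bound values and then read off what this forces for $\p_{i+1}$. First I would invoke \Cref{lem:beyond_bounds}: a price $\price_i<\olb[i]$ is strictly dominated by $\price_i=\olb[i]$, and a price $\price_i>\oub[i]$ is weakly dominated by $\price_i=\oub[i]$, so only $\price_i\in[\olb[i],\oub[i]\,]$ remains. Inside this interval, \Cref{lem:effectRange}\ref{enum:lemmaEffectRange_OppPart} says that moving $\price_i$ shifts $\olb[i+1]$ and $\oub[i+1]$ rigidly by the same amount (with a $+$ sign, since $\p_{i+1}\notin\Partition{\p_i}$), while \ref{enum:lemmaEffectRange_implications} says that if $\price_{i+1}$ is kept at a fixed distance from its bound, nothing beyond $\p_{i+1}$ is affected. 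Hence it suffices to examine the two endpoints $\olb[i]$ and $\oub[i]$.

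Next I would dispatch $\price_i=\olb[i]$. By \Cref{def:BoundIntuition} together with \Cref{lem:in_cover} and \Cref{cor:cutting}, some vertex $v$ with $\p_i\preceq v\prec\p_{i+1}$ satisfies the \ref{eq:cover_condition}, so there is a \ac{MWVC} of $\pathGraph_{\prec v}$ (hence of $\pathGraph$) containing $\p_i$ and the instance may be cut after $v$; the leader collects $\olb[i]$ on $\p_i$ independently of $\price_{i+1},\dots,\price_k$. This is exactly $\Option_2[i]$. For $\price_i=\oub[i]$, the key point -- directly from the definition of $\olb[i]$ -- is that with $\price_i>\olb[i]$ the vertex $\p_i$ can lie in the follower's cover only if $\p_{i+1}$ is \emph{excluded} from it, so the leader must commit to one of two things. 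If she forgoes $\p_i$ and keeps $\p_{i+1}$, then by \Cref{lem:effectRange}\ref{enum:lemmaEffectRange_OppPart} the choice $\price_i=\oub[i]$ raises $\olb[i+1]$ and $\oub[i+1]$ as far as possible without disturbing $\p_{i+2},\dots,\p_k$; this is $\Option_1[i]$. If instead she wants $\p_i$ in the cover, her revenue on $\p_i$ equals $\price_i$, so she takes $\price_i=\oub[i]$, and to make $\p_{i+1}$ excludable it is enough, by \Cref{obs:exclude_by_oub}, to set $\price_{i+1}=\oub[i+1]$ (a larger price for $\p_{i+1}$ gains nothing by \Cref{lem:beyond_bounds}); this is $\Option_3[i]$.

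The step I expect to be the main obstacle is the second one: arguing that no price \emph{strictly} between $\olb[i]$ and $\oub[i]$, and no alternative treatment of $\p_{i+1}$, can beat one of the three options. This rests on coupling the rigid-shift behaviour of the bounds under $\price_i$ with the observation that the entire downstream revenue after $\p_{i+1}$ depends on $\price_i$ and $\price_{i+1}$ only through their combined effect (\Cref{lem:effectRange}\ref{enum:lemmaEffectRange_implications}); once that is established, the three options genuinely exhaust the choices worth considering, and Sections~\ref{subsec:AppraisingRevenue} and~\ref{subsec:comparing_options} only have to compare their revenues.
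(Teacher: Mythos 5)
Your proposal is correct and follows essentially the same route as the paper: the paper's own proof is a one-line appeal to Lemma~\ref{lem:beyond_bounds}, Lemma~\ref{lem:effectRange}, and the preceding discussion in Section~\ref{subsec:Interaction_OpposingPartitions}, which is exactly the chain you spell out (restrict to $[\olb[i],\oub[i]\,]$, use the rigid shift of the bounds to reduce to the two endpoints, then split $\price_i=\oub[i]$ according to whether $\p_{i+1}$ is kept or sacrificed). Your version is in fact more explicit than the paper's about why interior prices are dominated.
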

	\begin{proof}
		The proof is a direct consequence of Lemma~\ref{lem:beyond_bounds}, Lemma~\ref{lem:effectRange} and our considerations above.
	\end{proof}
	
	\noindent
	Note that option~\ref{desc:option3}[$i$] already fixes $\price_{i+1}$ and thus, there are no options for $\p_{i+1}$ in this case. Consequently, it may still be that $\p_{i+1}$ is excluded even if $\p_{i+2} \in \Partition{\p_{i+1}}$. Following our deliberations from earlier option~\ref{desc:option2}[$i$] is the only reasonable choice when $\p_{i+1} \in \Partition{\p_i}$. The following observation captures this formally.
	
	\begin{observation}
		For $\p_i$ with $\p_{i+1} \in \Partition{p_i}$, the options condense to $\text{\ref{desc:option2}}[i]$. In particular, choosing $\p_i = \oub[i] > \olb[i]$ can only be reasonable if $\p_{i-1} \notin \Partition{\p_i}$ and $\text{\ref{desc:option3}}[i-1]$ is chosen.
	\end{observation}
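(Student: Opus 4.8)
I would prove the observation's two assertions separately, both by leaning on Lemmas~\ref{lem:beyond_bounds}, \ref{lem:effectRange}, \ref{lem:options_OnePart}, and~\ref{lem:Option_twoPartition}. For the first assertion, take any optimal pricing scheme $\price$ and, by Lemma~\ref{lem:beyond_bounds}, assume \WLOG that $\price_j\in[\olb[j],\oub[j]]$ for all $j$. Fix $i<k$ with $\p_{i+1}\in\Partition{\p_i}$ and suppose $\price_i>\olb[i]$ (otherwise $\price_i=\olb[i]$, which is already $\text{\ref{desc:option2}}[i]$). Apply the revenue shift from Section~\ref{subsec:Interaction_SamePartition}: replace $\price_i$ by $\olb[i]$ and $\price_{i+1}$ by $\price_{i+1}+(\price_i-\olb[i])$, keeping all other prices. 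By part~\ref{enum:lemmaEffectRange_OnePart} of Lemma~\ref{lem:effectRange}, both $\olb[i+1]$ and $\oub[i+1]$ increase by exactly $\price_i-\olb[i]$, so $\price_{i+1}$ retains the same position relative to its (new) bounds; hence the new scheme is feasible and the option used for $\p_{i+1}$ is unchanged. By part~\ref{enum:lemmaEffectRange_implications} of Lemma~\ref{lem:effectRange}, the bounds -- and therefore the prices and the revenue -- of $\p_{i+2},\dots,\p_k$ stay the same. Moreover the shift preserves $\price_i+\price_{i+1}$, and since $\p_i$ belongs to a follower's cover exactly when $\p_{i+1}$ does for $\price_i\in(\olb[i],\oub[i]]$ (Section~\ref{subsec:Interaction_SamePartition}) while $\p_i$ belongs to every cover once $\price_i=\olb[i]$, the revenue from $\{\p_i,\p_{i+1}\}$ does not decrease. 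Finally, lowering $\price_i$ does not hurt the revenue from $\p_{i-1}$: if $\price_{i-1}=\olb[i-1]$ then $\p_{i-1}$ lies in every \ac{MWVC} regardless of $\price_i$, and if $\price_{i-1}=\oub[i-1]>\olb[i-1]$ then $\p_{i-1}$ is either renounced (contributing nothing) or we are in the case $\text{\ref{desc:option3}}[i-1]$ discussed next, in which $\price_i$ is not a free choice. So the modified scheme is at least as good; applying the exchange repeatedly to the smallest index still violating ``$\price_i=\olb[i]$'' -- which removes that violation and introduces no new one -- terminates with an optimal scheme in which every such $i$ uses $\text{\ref{desc:option2}}[i]$, save for those where $\price_i=\oub[i]$ is imposed from the left.

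For the ``in particular'' part, the first assertion shows that $\price_i=\oub[i]>\olb[i]$ with $\p_{i+1}\in\Partition{\p_i}$ is never $\p_i$'s own choice, so it can only be imposed by the option selected for $\p_{i-1}$. Inspecting the options of Lemma~\ref{lem:Option_twoPartition}, only $\text{\ref{desc:option3}}[i-1]$ fixes $\price_i$ (it sets $\price_{i-1}=\oub[i-1]$ together with $\price_i=\oub[i]$), and, just like $\text{\ref{desc:option1}}[i-1]$ and $\text{\ref{desc:option2}}[i-1]$, it is only available when $\p_i\notin\Partition{\p_{i-1}}$, i.e.\ $\p_{i-1}\notin\Partition{\p_i}$; for $i=1$ there is no predecessor and $\price_1=\olb[1]$. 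This is exactly the claim.

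The step I expect to be most delicate is making the revenue shift watertight, i.e.\ verifying that no revenue is silently lost outside $\{\p_i,\p_{i+1}\}$ and that iterating the exchange terminates. The ``to the right'' direction is precisely what parts~\ref{enum:lemmaEffectRange_OnePart} and~\ref{enum:lemmaEffectRange_implications} of Lemma~\ref{lem:effectRange} deliver: both bounds of $\p_{i+1}$ move by the same amount, so $\p_{i+1}$'s option and all later bounds are preserved and no new violation is created to the right. The ``to the left'' direction needs the short case analysis on $\p_{i-1}$'s option sketched above, and here one has to avoid circularity by processing the violating indices in increasing order, so that the already-fixed prefix stays in its canonical form.
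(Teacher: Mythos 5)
Your proposal is correct and takes essentially the same route as the paper: the paper states this observation without a separate proof, treating it as a direct consequence of the revenue-shift discussion in the ``Same part'' paragraph, Lemma~\ref{lem:beyond_bounds}, Lemma~\ref{lem:effectRange}, Lemma~\ref{lem:options_OnePart}, and the structure of the options in Lemma~\ref{lem:Option_twoPartition} --- which is precisely the exchange argument you make explicit. If anything, your write-up is more careful than the paper's, since you also address termination of the exchange and the effect on $\p_{i-1}$.
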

	For simplicity, we refer to the options $\text{\ref{desc:option1}}[i]$ and $\text{\ref{desc:option3}}[i]$ regardless of whether $\p_{i+1} \in \Partition{p_i}$ or not. Yet, in case $\p_{i+1} \in \Partition{p_i}$, these options are irrelevant.
	
	In the following, we explain how to appraise the revenue corresponding to the options $\text{\ref{desc:option1}}[i]$~--~$\text{\ref{desc:option3}}[i]$. Afterward, in Section~\ref{subsec:comparing_options}, we elucidate how to evaluate them from an algorithmic perspective.
	
	\subsection{Appraising Revenue}
	\label{subsec:AppraisingRevenue}
	Since the bounds and thus ultimately the potential revenue for priceable vertex $\p_i$ depends on the prices $\price_1, \dots, \price_{i-1}$ of all priceable vertices preceding it, we define a \textit{benchmark price} for all these vertices $\p_j$ with $j < i$ when evaluating $\price_i$. Based on this benchmark, we anticipate the change in revenue for all following priceable vertices $\p_{i+1}, \dots, \p_{\ell}$ when evaluating $p_i$. Independent from our choice of $\price_i$, we continue evaluating the options for $\p_{i+1}$ as if $\price_i$ matched the benchmark.  
	
	In general, any benchmark between $\olb[j]$ and $\oub[j]$ works. Yet, as we see later, assuming $\price_j = \olb[j]$ for any $j < i$ is the most convenient choice. We define as $\mathcal{R}_{i}$ the maximum revenue that the leader can obtain from vertices $\p_i, \dots, \p_k$ when $\price_j = \olb[j]$ for all $j < i$. Similarly, we define $\mathcal{R}_{i}(\Option_\ell)$ as $\mathcal{R}_{i}$ when fixing one of the options $\text{\ref{desc:option1}}[i]$~--~$\text{\ref{desc:option3}}[i]$ for $\p_i$.
	
	Given that, we can compute the leader's revenue from $p_i$ onward as described in the following two lemmata.
	
	\begin{lemma}[Revenues for priceable vertices in same part]
		\label{lem:RevenueEstimation_onePart}
		Let $p_i$ be a priceable vertex of $\pathGraph$ with $i \leq (k-1)$ and $p_{i+1} \in \Partition{\p_i}$. Assuming $\price_j = \olb[j]$ for every $j < i$; the maximum revenue that the leader can obtain on $\pathGraph_{\p_i \prec}$ always corresponds to choosing $\text{\ref{desc:option2}}[i]$ introduced above:
		\begin{alignat}{4}
			\mathcal{R}_{i}(\text{\ref{desc:option2}}) &= \olb[i] + \mathcal{R}_{i \Shortplus 1}.
			\span \span
			\label{eq:RevenueO2'}\tag{R2'}
		\end{alignat}
		Note that we do not state how to the appraise the revenue of $\text{\ref{desc:option1}}[i]$ and $\text{\ref{desc:option3}}[i]$ since these options are generally inferior.
	\end{lemma}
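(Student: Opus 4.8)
The plan is to split the statement into two parts and establish each separately: (a) the identity $\mathcal{R}_{i}(\text{\ref{desc:option2}}) = \olb[i] + \mathcal{R}_{i+1}$ under the stated benchmark, and (b) the optimality claim, namely that under the benchmark $\price_j = \olb[j]$ for all $j<i$ no choice of $\price_i$ lets the leader extract more than $\olb[i] + \mathcal{R}_{i+1}$ from $\p_i, \dots, \p_k$.

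\emph{Part (a).} I would argue that \ref{desc:option2}$[i]$ fixes $\price_i = \olb[i]$, and then use Definition~\ref{def:BoundIntuition} together with the construction of $\olb[i]$ in \Cref{sec:cutting+bounds} to locate a vertex $v$ with $\p_i \preceq v \prec \p_{i+1}$ and $v \in \Partition{\p_i}$ at which the (weak) \ref{eq:cover_condition} holds while no proper predecessor satisfies the strict one. By \Cref{lem:in_cover} the follower then admits a \ac{MWVC} containing $\Partition{v} \cap \pathGraph_{\preceq v} \ni \p_i$, so, since ties favour the leader, she collects $\price_i = \olb[i]$ for $\p_i$. Cutting after $v$ by \Cref{cor:cutting} decouples the remainder $\pathGraph_{v \prec}$, which carries exactly $\p_{i+1}, \dots, \p_k$ with $\p_1, \dots, \p_i$ already pinned to $\olb[1], \dots, \olb[i]$ -- precisely the benchmark defining $\mathcal{R}_{i+1}$ -- so its optimal revenue equals $\mathcal{R}_{i+1}$ by definition, which yields the identity.

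\emph{Part (b).} I would first invoke \Cref{lem:beyond_bounds} to restrict attention to $\price_i \in [\olb[i], \oub[i]]$, then split on whether $\p_i$ lies in the follower's solution. If it does, \Cref{lem:options_OnePart}\ref{enum:CorollaryOptPrices1} pins the optimum to $\price_i = \olb[i]$, which is exactly \ref{desc:option2}$[i]$. If it does not, the leader gets nothing for $\p_i$ and, by \Cref{lem:options_OnePart}\ref{enum:CorollaryOptPrices2}, I may take $\price_i = \oub[i]$; writing $\delta \coloneqq \oub[i] - \olb[i] \ge 0$, \Cref{lem:effectRange}\ref{enum:lemmaEffectRange_OnePart} (applicable since $\p_{i+1} \in \Partition{\p_i}$) shows that moving from $\olb[i]$ to $\oub[i]$ drops both $\olb[i+1]$ and $\oub[i+1]$ by $\delta$. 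I would then distinguish whether $\p_{i+1}$ ends up in the solution: if it does, \Cref{lem:effectRange}\ref{enum:lemmaEffectRange_implications} lets me lift $\price_{i+1}$ back by $\delta$ into a benchmark-admissible pricing of $\p_{i+1}, \dots, \p_k$ that leaves the bounds of $\p_{i+2}, \dots, \p_k$ untouched, so the revenue salvageable from $\p_{i+1}, \dots, \p_k$ is at most $\mathcal{R}_{i+1} - \delta$; if it does not, I cut around $\p_{i+1}$ (as in the proof of \Cref{lem:beyond_bounds}), so the revenue on $\p_{i+2}, \dots, \p_k$ becomes independent of $\price_i, \price_{i+1}$ and is bounded by $\mathcal{R}_{i+1}$, since excluding $\p_{i+1}$ is itself one of the admissible benchmark choices (Observation~\ref{obs:exclude_by_oub}). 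In either sub-case the total over $\p_i, \dots, \p_k$ is at most $\mathcal{R}_{i+1} \le \olb[i] + \mathcal{R}_{i+1}$ (using $\olb[i] \ge 0$), so \ref{desc:option2}$[i]$ is optimal.

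I expect the ``exclude $\p_i$'' branch of Part (b) to be the crux: one must rule out that sacrificing $\p_i$ by setting $\price_i = \oub[i]$ buys enough extra revenue on $\p_{i+1}, \dots, \p_k$ to beat $\olb[i] + \mathcal{R}_{i+1}$, and this rests on the precise $\delta$-shift bookkeeping of \Cref{lem:effectRange} together with the slightly delicate matching between ``downstream revenue after a cut'' and the quantity $\mathcal{R}_{i+1}$ as defined through the benchmark. Part (a) is comparatively routine once \Cref{cor:cutting} is in place; and if one is content to cite the observation at the end of \Cref{subsec:options} for optimality, the whole argument collapses to that single cutting step.
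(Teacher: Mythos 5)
Your proposal is correct and follows essentially the same route as the paper: the paper's proof is a one-line appeal to Lemma~\ref{lem:effectRange} and the ``same part'' discussion in Section~\ref{subsec:options}, which is precisely the shift argument ($\price_i$ beyond $\olb[i]$ only moves revenue from $\p_i$ to $\p_{i+1}$, with no net gain) that your $\delta$-bookkeeping in Part~(b) makes explicit. You simply spell out the details -- the cut after the cover-condition vertex in Part~(a) and the case split on in-/exclusion of $\p_i$ and $\p_{i+1}$ -- that the paper leaves implicit.
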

	\begin{proof}
		This follows directly using Lemma~\ref{lem:effectRange} and the considerations in Section~\ref{subsec:AppraisingRevenue} in case $p_{i+1}$ lies in the same partition as $\p_i$.
	\end{proof}
	
	\begin{lemma}[Revenues for priceable vertices in opposing parts]
		\label{lem:RevenueEstimation_oppPart}
		Let $p_i$ be a priceable vertex of $\pathGraph$ with $i \leq (k-1)$ and $p_{i+1} \notin \Partition{\p_i}$. Assuming $\price_j = \olb[j]$ for every $j < i$; the maximum revenue that the leader can obtain on $\pathGraph_{\p_i \prec}$ given the options introduced above is:
		\begin{alignat}{4}
			\mathcal{R}_{i}(\text{\ref{desc:option1}}) &= (\oub[i] - \olb[i]) + \mathcal{R}_{i \Shortplus 1},
			\span \span
			\label{eq:revenueO1}\tag{R1}\\
			\mathcal{R}_{i}(\text{\ref{desc:option2}}) &= \olb[i] + \mathcal{R}_{i \Shortplus 1},
			\span \span
			\label{eq:RevenueO2}\tag{R2}\\
			\mathcal{R}_{i}(\text{\ref{desc:option3}}) &=
			\begin{cases}
				\oub[i] + (\oub[i \Shortplus 1] - \olb[i \Shortplus 1]) + \mathcal{R}_{i \Shortplus 2} & \text{if } p_{i \Shortplus 2} \notin \Partition{p_{i \Shortplus 1}},\\
				\oub[i] - (\oub[i \Shortplus 1] - \olb[i \Shortplus 1]) + \mathcal{R}_{i \Shortplus 2} &\text{if } \; p_{i \Shortplus 2} \in \Partition{p_{i \Shortplus 1}}, \\
				\oub[i] &\text{if } i+2 >k.
			\end{cases}
			\label{eq:RevenueO3}\tag{R3}
		\end{alignat}
	\end{lemma}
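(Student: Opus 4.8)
The plan is to verify the three identities one at a time. In each case I would first read off, from the description of the option in Lemma~\ref{lem:Option_twoPartition}, which priceable vertices end up in the follower's solution and at which prices, collect the revenue these contribute directly, and then express the revenue on the remaining suffix in terms of $\mathcal{R}_{i+1}$ or $\mathcal{R}_{i+2}$ by invoking Lemma~\ref{lem:effectRange}. Throughout I keep the benchmark convention of Section~\ref{subsec:AppraisingRevenue}, \ie $\price_j=\olb[j]$ for all $j<i$; by Lemma~\ref{lem:beyond_bounds} this costs nothing, and it makes $\mathcal{R}_{i+1}$ and $\mathcal{R}_{i+2}$ precisely the quantities the right-hand sides refer to.

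For option~\ref{desc:option2} there is almost nothing to do: $\price_i=\olb[i]$ is exactly the benchmark value, so by Definition~\ref{def:BoundIntuition} and the tie-breaking rule $\p_i$ lies in a \ac{MWVC} and contributes $\olb[i]$, while $\price_i$ now agrees with the price used to define $\mathcal{R}_{i+1}$, so the best revenue from $\p_{i+1},\dots,\p_k$ equals $\mathcal{R}_{i+1}$; this is the claimed formula for $\mathcal{R}_i(\text{\ref{desc:option2}})$. For option~\ref{desc:option1} we set $\price_i=\oub[i]$ and renounce $\p_i$, so $\p_i$ contributes $0$. Since $\p_{i+1}\notin\Partition{\p_i}$, Lemma~\ref{lem:effectRange}\ref{enum:lemmaEffectRange_OppPart} says that moving $\price_i$ up from $\olb[i]$ to $\oub[i]$ translates both $\olb[i+1]$ and $\oub[i+1]$ upward by $x\coloneqq\oub[i]-\olb[i]$. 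I would then invoke the ``translation lemma'' described in the last paragraph: translating the whole bound interval of $\p_{i+1}$ upward by $x$ raises the best revenue obtainable from $\p_{i+1},\dots,\p_k$ by exactly $x$. Adding this to the $0$ from $\p_i$ gives $x+\mathcal{R}_{i+1}=(\oub[i]-\olb[i])+\mathcal{R}_{i+1}$, the claimed formula for $\mathcal{R}_i(\text{\ref{desc:option1}})$.

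Option~\ref{desc:option3} is the most delicate. Here $\price_i=\oub[i]$ and $\price_{i+1}=\oub[i+1]$, so $\p_{i+1}$ is pushed out of every \ac{MWVC}; by the definition of $\oub[i]$ this is exactly the situation in which $\p_i$ can still be included, so $\p_i$ contributes $\oub[i]$ and $\p_{i+1}$ contributes $0$. It remains to account for $\p_{i+2},\dots,\p_k$. Using Definition~\ref{def:gap} and Lemma~\ref{lem:effectRange}, I would check that for every $v$ with $\p_{i+2}\preceq v$ the contributions of $\price_i=\oub[i]$ and of $\price_{i+1}=\oub[i+1]$ to $\gap(v)$ cancel the dependence on $x$ (because $\p_i$ and $\p_{i+1}$ sit on opposite sides of the bipartition), leaving a net translation of $\p_{i+2}$'s bound interval by $+(\oub[i+1]-\olb[i+1])$ when $\p_{i+2}\notin\Partition{\p_{i+1}}$ and by $-(\oub[i+1]-\olb[i+1])$ when $\p_{i+2}\in\Partition{\p_{i+1}}$. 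The translation lemma then turns this into $\mathcal{R}_{i+2}\pm(\oub[i+1]-\olb[i+1])$, producing the first two cases; if $i+2>k$ there is no further priceable vertex, so the revenue is just $\oub[i]$, the third case.

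The one genuinely non-trivial ingredient, used in all three parts, is the translation lemma: a constant translation of the bound interval of a priceable vertex translates the best attainable revenue on the corresponding suffix by the same constant. I would prove this by induction on $k-i$, with $\p_k$ -- where $\olb[k]=\oub[k]$ and the suffix revenue is simply $\olb[k]$ -- as the base case. The inductive step is the manipulation already sketched: after shifting $\p_{i+1}$'s interval, re-price $\p_{i+1}$ so as to keep the same distance to its (new) bounds, and use Lemma~\ref{lem:effectRange}\ref{enum:lemmaEffectRange_implications} to conclude that the bounds -- hence the revenues -- of $\p_{i+2},\dots,\p_k$ are unaffected, so that $\p_{i+1}$ alone absorbs the shift. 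I expect the main friction when writing this out to be the sign bookkeeping: tracking, for each successor vertex, which side of the bipartition it lies on and therefore in which direction the induced shift of $\gap$ (and of the bounds) points.
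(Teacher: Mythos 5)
Your overall route coincides with the paper's: the same case analysis over the three options, the same benchmark convention, and the same sign bookkeeping via Lemma~\ref{lem:effectRange}; your treatment of $\Option_2[i]$ and of the degenerate case $i+2>k$ is essentially identical. Where you go beyond the paper is in isolating a ``translation lemma'' --- that shifting the bound interval of $\p_{i+1}$ by $x$ shifts the optimal suffix revenue by exactly $x$ --- as the one non-trivial ingredient. That is the right instinct: the paper uses precisely this claim but only asserts it (``the extent of this increase is precisely $\oub[i]-\olb[i]$'') without proof.

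However, the inductive proof you sketch for that lemma does not close. The step ``$\p_{i+1}$ alone absorbs the shift'' is valid only when the optimal continuation actually collects revenue on $\p_{i+1}$, i.e.\ chooses $\Option_2[i+1]$ or $\Option_3[i+1]$. If the optimal continuation is $\Option_1[i+1]$ (renounce $\p_{i+1}$), then $\p_{i+1}$ contributes zero in both the shifted and the unshifted scenario, and since $\price_{i+1}$ is set to the respective upper bound it keeps the same distance to its bounds, so by Lemma~\ref{lem:effectRange}~\ref{enum:lemmaEffectRange_implications} the bounds of $\p_{i+2},\dots,\p_k$ --- and hence the downstream revenue --- are identical in both scenarios. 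Writing $A$, $B$, $C$ for the unshifted revenues of $\Option_1[i+1]$, $\Option_2[i+1]$, $\Option_3[i+1]$, the shifted suffix optimum is $\max\{A,\,B+x,\,C+x\}$, which equals $\max\{A,B,C\}+x$ only when $A\le\max\{B,C\}+x$; if $A$ dominates (e.g.\ $\olb[i+1]$ small, $\oub[i+1]$ large, $x$ small), the increase is strictly less than $x$. The translation lemma therefore holds only as an inequality in general, and your derivations of \eqref{eq:revenueO1} and \eqref{eq:RevenueO3} inherit this gap; repairing it requires either showing that an optimal continuation after $\Option_1[i]$ or $\Option_3[i]$ never immediately renounces the vertex whose bounds were shifted, or arguing that the resulting overestimate never changes the maximizer in Algorithm~\ref{alg:comparing}. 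Be aware that the paper's own proof makes the same unexamined assertion at this point, so your write-up is no less rigorous than the original --- but as a self-contained argument the induction as sketched would fail.
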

	
	\medskip
	\begin{proof}
		We begin with $\text{\ref{desc:option1}}$ which corresponds to renouncing $\p_i$ with the goal of increasing the revenue on $\p_{i+1}$. By Observation~\ref{obs:exclude_by_oub}, $\oub[i]$ suffices to obtain a \ac{MWVC} of $\pathGraph$ excluding $\p_i$. Furthermore, by Lemma~\ref{lem:beyond_bounds}, exceeding the upper cannot be beneficial compared to $\price_i = \oub[i]$. Thus, this price also allows for the maximum increase of $\price_{i+1}$.
		In accordance with Lemma~\ref{lem:effectRange}, $\p_i$ does not influence any of the vertices $\p_{i+2}, \dots, \p_k$ given that $\price_{i+1}$ remains relative to its (now shifted) bounds.
		Clearly, $\text{\ref{desc:option1}}[i]$ yields no revenue on $\p_i$ but increases the revenue on $\p_{i+1}$ compared to our benchmark $\price_i = \olb[i]$. Recall that $\p_i$ and $\p_{i+1}$ are on different sides of the bipartition and thus Lemma~\ref{lem:effectRange}~\ref{enum:lemmaEffectRange_OnePart} yields that the extent of this increase is precisely $\oub[i] - \olb[i]$.
		
		For the second option, we obtain the benchmark price $\price_i = \olb[i]$ as revenue. By Definition~\ref{def:BoundIntuition}, this price is optimal given that we want to include $\p_i$ in a \ac{MWVC} of some subpath $\pathGraph_v$, with $\p_i \preceq v \prec \p_{i+1}$. Since we do not deviate from the benchmark, there is no change in revenue for $\pathGraph_{\p_{i+1} \preceq}$ to be anticipated.
		
		Lastly, consider $\text{\ref{desc:option3}}[i]$. To see why the claimed prices lead to maximum revenue \wrt the desired cover, recall that by Observation~\ref{obs:exclude_by_oub}, $\price_{i+1} = \oub[i+1]$ suffices to obtain a \ac{MWVC} excluding $\p_{i+1}$. Furthermore, by Lemma~\ref{lem:beyond_bounds}, increasing $\price_{i+1}$ beyond $\oub[i+1]$ induces no change for the revenue on the remaining path. Furthermore, $\p_i$ is excluded from any \ac{MWVC} of $\pathGraph$ if $\price_i > \oub[i]$ (cf. Definition~\ref{def:BoundIntuition}). Since $\p_{i+1}$ is not included and no vertex from $\pathGraph \setminus \Partition{\p_i}$ between $\p_i$ and $\p_{i+1}$ satisfies the strict \ref{eq:cover_condition}, $\p_i$ is also included at this price.
		
		Now, to see that Equation~\eqref{eq:RevenueO3} provides the corresponding revenue based on whether $\p_{i+1}$ and $\p_{i+2}$ are in the same part of the partition, observe the following:
		In both cases, the leader obtains a revenue of $\price_i(\text{\ref{desc:option3}}) = \oub[i]$ for $\p_i$ but no revenue for $\p_{i+1}$. This leaves her to maximize the revenue on $\pathGraph_{\p_{i+2}\preceq}$. Now, if $\p_{i + 2} \in \Partition{\p_{i + 1}}$, she essentially \emph{shifts} revenue from $p_{i+2}$ to $p_{i+1}$ (cf.\ Lemma~\ref{lem:effectRange}) thereby reducing the revenue on $p_{i+2}$ compared to our benchmark. Conversely, if $\p_{i + 2} \notin \Partition{\p_{i + 1}}$ the situation is similar to $\mathcal{R}_i(\text{\ref{desc:option1}})$ and thus the effect reverses.
		If $i+2>k$, vertex $\p_{i+1}$ is the last priceable vertex on $\pathGraph$. This third case of Equation~\eqref{eq:RevenueO3} follows directly since we include $\p_i$ for price $\price_i = \oub[i]$ by excluding $\p_{i+1}$, and there is no other vertex to be influenced by the deviation of $\price_{i+1}$ from its benchmark.
	\end{proof}
	\medskip
	
	\subsection{Compare Options and Resolve Prices}
	\label{subsec:comparing_options}
	Putting everything together, we now discuss how the options introduced in Section~\ref{subsec:options} can be evaluated efficiently. Since the number of possible combinations is exponential in $k$, comparing all of them would result in an exponential blow-up. Fortunately, Lemma~\ref{lem:RevenueEstimation_onePart} and Lemma~\ref{lem:RevenueEstimation_oppPart} provide us with the means to compute the optimal choice for $\p_i$ already when we consider $\p_{i+1}$. Recall that we assume $\price_j = \olb[j]$ as the benchmark price for all $\p_j \in P$. This allows
	to compute fixed bounds $\olb[j]$, $\oub[j]$ for every priceable vertex. By anticipating any changes in revenue that arise from deviating from this benchmark, we can compare the revenue on the remaining path.
	
	With that, we can compare $\text{\ref{desc:option1}}[i]$ and $\text{\ref{desc:option2}}[i]$ right away. Since $\text{\ref{desc:option3}}[i]$ already fixes a price for the next priceable vertex $\p_{i+1}$, it can only be compared to $\text{\ref{desc:option1}}[i+1]$ and $\text{\ref{desc:option2}}[i+1]$. The important insight is that independent of which of the options is best, we find a clear optimal choice for $\price_i$. To evaluate $\p_{i+1}$, we might need to open a new branch but since there is only one left for $\p_{i}$, we end up with at most two open branches at any time. Figure \ref{fig:Branching-GeneralCase} illustrates this procedure.
	\medskip
	
	\tikzset{
	every node/.style={
		thick,
		inner xsep = 3pt,
		inner ysep = 2pt,
		text = black,
		font = {\bfseries\small},
	},
}
\begin{figure}[tbph]
	\resizebox{\textwidth}{!}{
		\centering
		\begin{tikzpicture}
			\centering
			\node[rectangle, textbox] (AGR20) {$\AGR{2}{i \Shortminus 2}$};
			\node[above = \fullVertSpacing of AGR20, anchor = south, vertex, draw = none](d0){\tikzdots};
			\node[below = 1.5cm of AGR20, outer sep = 0, shape = rectangle, minimum width= .7cm, minimum height= 0cm](h0){};
			\node[right = 1cm of AGR20.east, textbox](O21){\color{RWTH_Magenta}$\Option_2[i \Shortminus 1]$};
			\node[above = \vertSpacing of O21.north, anchor=south, textbox](O11){\color{HKS44}$\Option_1[i \Shortminus 1]$};
			\node[below = \vertSpacing of O21.south, anchor=north, textbox](O31){\color{RWTH_Orange}$\Option_3[i \Shortminus 1]$};
			\node[above = \vertSpacing of O11.north, anchor = south, vertex](p1){\normalsize $p_{i-1}$}; 
			\node[right= 1cm of O21, textbox](AGR11){$\AGR{1}{i \Shortminus 1}$};
			\node[right = 2.9cm of p1, vertex, draw = none](d1){\tikzdots};
			\node[right= \vertSpacing of AGR11.east, anchor=west](AGR21){$\AGR{2}{i \Shortminus 1}$};
			\node[right = 1cm of AGR21, textbox](O22){\color{RWTH_Magenta}$\Option_2[i]$};
			\node[above = \vertSpacing of O22, textbox](O12){\color{HKS44}$\Option_1[i]$};
			\node[below = \vertSpacing of O22, textbox](O32){\color{RWTH_Orange}$\Option_3[i]$};
			\node[above = \vertSpacing of O12.north, anchor = south, vertex, shape = circle](p2){\normalsize $p_{i}$};
			\node[right= 1cm of O22, textbox](AGR12){$\AGR{1}{i}$};
			\node[right = 2.4cm of p2, vertex, draw = none](d2){\tikzdots};
			\node[right= \vertSpacing of AGR12, textbox](AGR22){$\AGR{2}{i}$};
			\node[right = 1cm of AGR22, textbox](O23){\color{RWTH_Magenta}$\Option_2[i \Shortplus 1]$};
			\node[above = \vertSpacing of O23, textbox](O13){\color{HKS44}$\Option_1[i \Shortplus 1]$};
			\node[below = \vertSpacing of O23, textbox](O33){\color{RWTH_Orange}$\Option_3[i \Shortplus 1]$};
			\node[above = \vertSpacing of O13.north, vertex](p3){\normalsize $p_{i+1}$};
			\node[right= 1cm of O23, textbox](AGR13){$\AGR{1}{i \Shortminus 1}$};
			\node[right= 1.55cm of p3, vertex, draw = none](d3){\tikzdots};
			\node[right= 1.55cm of d3, vertex, shape = circle](p4){$p_{i+2}$};
			\node[right= 1cm of AGR13.east, outer ysep=.1cm](AGR23){\tikzdots};
			\node[below = 1.8cm of AGR23, outer sep = 0, shape = rectangle, minimum width= .7cm, minimum height= 0cm, outer ysep=.1cm](h4){\tikzdots};
			\draw[-stealth, bend right, out=-28, in=-150] (h0.north west) to (AGR21);
			\draw[-stealth] (AGR20.north east) to (O11.south west);
			\draw[-stealth] (AGR20.east) to (O21.west);
			\draw[-stealth] (AGR20.south east) to (O31.north west);
			\draw[-stealth] (O11.south east) to (AGR11.north west);
			\draw[-stealth] (O21.east) to (AGR11.west);
			\draw[-stealth, bend right, out=-20, in=-140] (O31.east) to (AGR22);
			\draw[-stealth] (AGR11) to (AGR21);
			\draw[-] (d0) to (p1);
			\draw[-] (p1) to (d1);
			\draw[-stealth] (AGR21.north east) to (O12.south west);
			\draw[-stealth] (AGR21.east) to (O22.west);
			\draw[-stealth] (AGR21.south east) to (O32.north west);
			\draw[-stealth] (O12.south east) to (AGR12.north west);
			\draw[-stealth] (O22.east) to (AGR12.west);
			\draw[-stealth, bend right, out=-20, in=-140] (O32.east) to (AGR23);
			\draw[-stealth] (AGR12) to (AGR22);
			\draw[-] (d1) to (p2);
			\draw[-] (p2) to (d2);
			\draw[-stealth] (AGR22.north east) to (O13.south west);
			\draw[-stealth] (AGR22.east) to (O23.west);
			\draw[-stealth] (AGR22.south east) to (O33.north west);
			\draw[-stealth] (O13.south east) to (AGR13.north west);
			\draw[-stealth] (O23.east) to (AGR13.west);
			%
			\draw[-stealth] (AGR13) to (AGR23);
			\draw[-] (d2) to (p3);
			\draw[-] (p3) to (d3);
			\draw[-] (d3) to (p4);
			\draw[-stealth, out=-20, in=170] (O33.east) to (h4);
		\end{tikzpicture}
	}
	\caption{A sequence of priceable vertices from alternating parts of the bipartition and their pricing options -- as introduced in Subsection~\ref{subsec:options}. At any point in time, only two options remain open.}
	\label{fig:Branching-GeneralCase}
\end{figure}
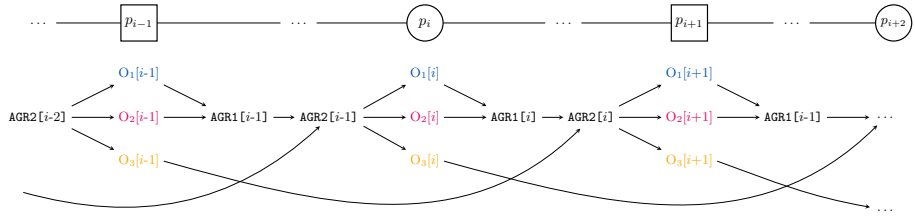

	
	\paragraph{Algorithm~\ref{alg:comparing} \texttt{CompareOptions}.}
	\SetKwFunction{ComparingOptions}{ComparingOptions}%
\SetKwFunction{ComputeOptionsEOP}{ComputeOptionsEOP}%
\begin{algorithm}[tbph]
	\algorithmInit
	\Fn{\CompareOptions{$\pathGraph, \olb, \oub$}}{	
		\KwIn{A \ac{StackVC}-instance $\pathGraph = (v_{1}, \dots, v_{n})$ with priceable vertices $p_1, \dots,p_k$. Lower and upper bounds $\olb[i]$, $\oub[i]$ for $\p_i$, assuming a benchmark price $\price_j = \olb[j]$ for every $j < i$.}
		\KwOut{An optimal choice from $\Option_1[i] - \Option_3[i]$ for each $\p_i$.}
		Initialize $\revOpt{0}{3}$ = $- \infty$ \tcp*[f]{no case distinction for $i=1$ needed}\;
		\For{$i \coloneqq 1$ \To $k$}{
			\uIf(\tcp*[f]{Compute the revenue for $\Option_1[i]$ - $\Option_3[i]$}){$i<k$}{
				\uIf{$\p_{i+1} \in \Partition{\p_i}$}{
					Compute $\revOpt{i}{1} - \revOpt{i}{3}$ with Lemma~\ref{lem:RevenueEstimation_onePart}
				}
				\Else(\tcp*[f]{$\p_{i+1} \in \pathGraph \setminus \Partition{\p_i}$}){
					Compute $\revOpt{i}{1} - \revOpt{i}{3}$ with Lemma~\ref{lem:RevenueEstimation_oppPart}
				}	
			}
			\ElseIf(\tcp*[f]{end of the path, ignore irrelevant options}){$i=k$}{
				
				$\revOpt{k}{1}$ = $\revOpt{k}{3}$ = $- \infty$ \;
				$\revOpt{k}{2}$ = $\olb[k]$ = $\oub[k]$\;
			}
			$\AGR{1}{i}$ = max $\{\revOpt{i}{1},\revOpt{i}{2}\}$ \tcp*[r]{discard inferior option} \label{line:AGR1}
			$\AGR{2}{i}$ = max $\{\AGR{1}{i}, \revOpt{i-1}{3} \}$ \tcp*[r]{discard inf.\ option for $\p_{i-1}$} \label{line:AGR2}
			backtrack optimal choice for $\p_{i-1}$ from $\AGR{2}{i}$ \label{line:backtrackChoice}
		}
		\KwRet{optimal pricing scheme}	
	}
	\setstretch{1}
	\caption{An algorithm to determine optimal pricing scheme given the benchmark prices.}
	\label{alg:comparing}
\end{algorithm}
	In the following, we describe the details of our evaluation procedure along the lines of Algorithm~\ref{alg:comparing}. The main idea is to evaluate the revenue associated to $\text{\ref{desc:option1}}[i]$ - $\text{\ref{desc:option3}}[i]$ for every $\p_i$. In case an option is contextually unavailable or generally inferior, we set the associated revenue to~$-\infty$.
	
	In line~\ref{line:AGR1}, we assess which of the comparable options at $\p_i$, namely $\text{\ref{desc:option1}}[i]$ and $\text{\ref{desc:option2}}[i]$, is superior. Given that, we can asses whether excluding $\p_{i-1}$ is beneficial or not. This is done in line~\ref{line:AGR2}. This corresponds to comparing $\text{\ref{desc:option1}}[i]$, $\text{\ref{desc:option2}}[i]$, and $\text{\ref{desc:option3}}[i-1]$. Consider Lemma~\ref{lem:RevenueEstimation_oppPart} to see why this is viable. 
	It remains to trace back which option for $\p_{i-1}$ led to this optimal revenue. We cover this in line~\ref{line:backtrackChoice}.
	
	For the instance depicted in Figure~\ref{fig:StackVC-Example}, we obtain the benchmark bounds $\olb[1]=5$, $\oub[1]=13$ for the first and $\olb[2] = \oub[2]=1$ for the second (and last) priceable vertex. We obtain $\mathcal{R}_1(\text{\ref{desc:option1}})=(13-5) + \mathcal{R}_{i+1} > 5 + \mathcal{R}_{i+1} = \mathcal{R}_1(\text{\ref{desc:option2}})$, telling us that we can discard $\text{\ref{desc:option2}}[i]$ immediately. Since $\p_2$ is already the last priceable vertex, we obtain $\mathcal{R}_1(\text{\ref{desc:option3}}) = 13$.
	To compare $\mathcal{R}_1(\text{\ref{desc:option1}})$ and $\mathcal{R}_1(\text{\ref{desc:option3}})$, we must evaluate $\mathcal{R}_2$. Since $\p_2$ is the last priceable vertex, the lower  and upper bound coincide and the only relevant option is $\text{\ref{desc:option1}}[2]$. Hence, we obtain $\mathcal{R}_2= \olb[i] = \oub[i] = 1$.
	Now, we can tell that $\text{\ref{desc:option3}}[1]$ is superior and hence the optimal choice for the leader is including $\p_1$ for its upper bound by excluding $\p_2$ (i.e., set $\price_2$ to the upper bound). Note that the actual optimal price for $\p_2$ is not the upper bound computed based on the benchmark prices but must rather be re-computed after plugging in the actual choice for $\price_1$.
	
	\begin{algorithm}[tbph]
	\algorithmInit
	\Fn{\ResolvePrices{$\pathGraph$, optimal pricing scheme}}{	
		\KwIn{A path $\pathGraph = (v_{1}, \dots, v_{n})$ with $k$ priceable vertices $p_1, \dots,p_k$, vertex weights $\omega(v)$ for $j \in F$ and optimal options $O[i]$ for each priceable vertex $\p_i$.}
		\KwOut{Optimal prices $\price_1,\dots,\price_k$ for the leader}
		\For{$i=1$ \To $k$}{
			$(\olb[i], \oub[i])$ = \ComputeBounds{$\pathGraph_{\prec \p_{i+1}}$} with true prices $\price_1, \dots, \price_{i-1}$\;
			\uIf{$O_2$ is the best option for vertex $\p_i$}{
				$\price_{i}$ = $\olb[i]$\;
			}
			\Else(\tcp*[f]{$O_1[i]$, $O_3[i]$ or $O_3[i-1]$ led to $\price_i$}){
				$\price_{i}$ = $\oub[i]$\;
			}
		}
		\KwRet{$\price_1, \dots, \price_k$}
	}
	\setstretch{1}
	\caption{An algorithm which computes the optimal prices given an optimal pricing scheme.}
	\label{alg:pricing}
\end{algorithm}
	\paragraph{Algorithm~\ref{alg:pricing} \texttt{ResolvePrices}.}
	The optimal pricing scheme corresponds to optimal options for in- or excluding certain vertices from the follower's \ac{MWVC}. We achieve this  by setting the prices to lower or upper bounds. The bounds computed using the benchmark suffice to compare the different options, yet, they do not yield the actual prices. To compute these, we must consider the actual prices of $\p_j$ for every $j < i$ when computing the bounds of $\p_i$. We do this in \Cref{alg:pricing}.
	
	\section{Proof of Theorem~\ref{thm:StackVCPlinTime}}
	\label{sec:ProofOfMainContr}
	In the following, we bring together all our insights from the previous section to prove our main contribution Theorem~\ref{thm:StackVCPlinTime}. The foundation of our algorithm are the bounds computed by the subroutine \ComputeBounds (cf.\  Algorithm~\ref{alg:ComputeBounds}). Definition~\ref{def:BoundIntuition} states the impact of these bounds on a \ac{MWVC}. The remaining statements in Section~\ref{sec:cutting+bounds}, in particular Lemma~\ref{lem:correctness_boundAlgo}, justify their computation by Algorithm~\ref{alg:ComputeBounds}. The concept of these bounds rests upon our insights from Section~\ref{sec:in_cover} in which we dissected the structure of \acp{MWVC} of a path. The fundamental concept here is the \ref{eq:cover_condition} introduced in Lemma~\ref{lem:in_cover}. The \ref{eq:cover_condition} allows us to determine when a vertex is part of a \ac{MWVC} and thereby exhaust the potential revenue of a priceable vertex. Section~\ref{sec:effect} is devoted to solving the primary challenge of \ac{StackVC} on a path, namely the combinatorial problem of finding the optimal selection of priceable vertices to be included in the follower's solution. Lemma~\ref{lem:beyond_bounds} and \ref{lem:effectRange} justify our restriction of the leader's choices to the previously introduced bounds. We formalized this in Lemma~\ref{lem:options_OnePart} and \ref{lem:Option_twoPartition}. 
	Lemma~\ref{lem:effectRange} also provides important insights on the interaction of priceable vertices which we use to appraise the revenue of the different options for each priceable vertex (cf.\ Lemma~\ref{lem:RevenueEstimation_onePart} and \ref{lem:RevenueEstimation_oppPart}). An important step in this procedure was the definition of a benchmark price to make the options artificially comparable. Section~\ref{subsec:comparing_options}, in particular, Algorithm~\ref{alg:comparing}, provides the details of how this comparison can be implemented efficiently. Once the optimal pricing decisions have been determined based on the benchmark prices, we must compute the actual corresponding prices. We do this with Algorithm~\ref{alg:pricing}. 
	
	Algorithm~\ref{alg:StackVCP_sketch} aggregates all the above steps and together with the aforementioned statements attesting its correctness, it serves as proof of Theorem~\ref{thm:StackVCPlinTime} with one small caveat that we discuss in the following.
	
	\paragraph{Time complexity.}
	It is easy to see that Algorithm~\ref{alg:StackVCP_sketch} in its present implementation has a time complexity in $\mathcal{O}(n^2)$. Responsible for the quadratic runtime, however, is only the computation of the bounds in the subroutine \ComputeBounds (cf.\ Algorithm~\ref{alg:ComputeBounds}) which runs in $\mathcal{O}(n \cdot k)$. A savvier implementation of this subroutine gets along with less than two traversals of the path, thereby achieving (strongly) linear complexity. The enhancement bases on the following observation:
	
	When setting the $\price_i$ to $\olb[i](\price_1,\dots, \price_{i-1})$ or to $\oub[i](\price_1,\dots, \price_{i-1})$, we reach a vertex $v$ at which $\gap(v) = 0$. We can cut the path after $v$, and compute the remaining bounds on $\pathGraph_{v \prec}$. The same holds if we set $\price_i$ to $\olb[i]$ or $\oub[i]$, when executing \ResolvePrices. In the worst case, $v$ is always close to $\p_i$ and the bounds never intersect, so we always reach $\p_{i+1}$ before we know that we must cut at $v$ and therefore traverse $v, \dots, p_{i+1}$ two times.
	
	With this optimization, we achieve the claimed strongly linear runtime since the branching scheme clearly runs in $\mathcal{O}(n)$ as well. Since we only store a constant number of variables for every priceable vertex, storage complexity is (strongly) linear as well.
	
	\section{Conclusion and Outlook}
	
	The main motivation of this paper is answering whether \ac{StackVC} can be solved on some subclass of bipartite graphs, like paths or trees -- a question that was also raised in \cite{StackMaxClosureOMS}. Indeed, we can affirmatively answer this question for paths, as we state in Theorem~\ref{thm:StackVCPlinTime}. Thereby our algorithm is the first step to closing the complexity gap of \ac{StackVC} on bipartite graphs. It was already known to be \textsc{NP}-hard on general bipartite graphs~\cite{StackMaxClosureOMS} and polytime solvable with the restriction that priceable vertices may only be on one side of the bipartition~\cite{BriestEtAl_NetworkPricing}. However, to the best of our knowledge, there have not been any positive results regarding the complexity of \ac{StackVC} on bipartite graphs without any restrictions.
	The question whether \ac{StackVC}  can be solved efficiently also on trees remains open. 
	
	\bibliographystyle{splncs04}
	\bibliography{ms.bib}
\end{document}